\newcommand{\mf}[1]{\mathbf{#1}}
\DeclareMathOperator{\Ad}{Ad}
\DeclareMathOperator{\ad}{ad}
\DeclareMathOperator{\tr}{tr}
\DeclareMathOperator{\var}{var}
\newtheorem{theorem}{Theorem}
\newtheorem{lemma}{Lemma}
\newtheorem{corollary}{Corollary}
\newtheorem{remark}{Remark}
\begin{document}

\title{Parameter Estimation on Homogeneous Spaces}

\author{Shiraz Khan, Gregory S. Chirikjian,~\IEEEmembership{Fellow,~IEEE}%
\thanks{\textsuperscript{\textdagger}The authors are with the Department of Mechanical Engineering, University of Delaware, Newark, DE 19716, USA (e-mail: \tt{shiraz@udel.edu}, \tt{gchirik@udel.edu}).}%
\textsuperscript{\textdagger}
}


\maketitle


\begin{abstract}
The \textit{Fisher Information Metric (FIM)} and the associated \textit{Cram\'er-Rao Bound (CRB)} are fundamental tools in statistical signal processing, which inform the efficient design of experiments and algorithms for estimating the underlying parameters. In this article, we investigate these concepts for the case where the parameters lie on a \textit{homogeneous space}. Unlike the existing Fisher-Rao theory for general Riemannian manifolds, our focus is to leverage the group-theoretic structure of homogeneous spaces, which is often much easier to work with than their Riemannian structure. The FIM is characterized by identifying the homogeneous space with a coset space, the group-theoretic CRB and its corollaries are presented, and its relationship to the Riemannian CRB is clarified. The application of our theory is illustrated using two examples from engineering: {(i) estimation of the pose of a robot} and {(ii) sensor network localization}. In particular, these examples demonstrate that homogeneous spaces provide a natural framework for studying statistical models that are \textit{invariant} with respect to a group of symmetries.
\end{abstract}

\begin{IEEEkeywords}
Cram\'er-Rao bound, Lie groups, state estimation
\end{IEEEkeywords}

\section{Introduction}
\IEEEPARstart{M}{any} engineering problems require the estimation of some underlying parameter(s) $\theta \in \Theta$ based on noisy data, where $\Theta$ is called the \textit{parameter space}.
In the medical sciences, one infers the parameters of a physiological model using medical tests and imaging techniques, while accounting for random variability in the observed data. In supervised deep learning, the training dataset represents the observation, such that the parameters of a neural network are estimated to best fit this observation.
It is well known that the parameter spaces of such problems can be equipped with a geometric structure which is canonical (i.e., not based on arbitrary choices), by way of the \textit{Fisher information metric (FIM)}.
The FIM enables one to interpolate and measure distances between probability distributions \cite{costa2015fisher}, quantify the uncertainty of estimators using the \textit{Cram\'er-Rao bound (CRB)}\cite{smith2005covariance}, and regularize variational inference algorithms \cite{li2020fisher}. An empirical approximation of the FIM is used to accelerate gradient descent in the deep learning context \cite{martens2020new,zhang2019fast,karakida2020understanding,faria2023fisher} on account of the invariance of the FIM to reparametrization of the statistical model, a feature that is not shared by alternative notions of statistical distance.
The CRB is of independent interest to the signal processing community, with applications in sensor selection \cite{kokke2023sensor} as well as for the design of optimal experiments \cite{zhao2018optimal} and estimators \cite{chen2021cramer,song2023intelligent}.

Classical analyses of the FIM and CRB have assumed the parameter space $\Theta$ to be a \textit{vector space}. However, many parametric models of interest have additional constraints that force the parameters to lie on a lower dimensional submanifold of a vector space. 
For e.g., the covariance of an $n$-dimensional Gaussian distribution lies in the space of $n\times n$ symmetric positive definite (SPD) matrices \cite{smith2005covariance},
the end effector of a robotic arm takes values on the Lie group $SE(3)$ \cite{wang2006error}, the heading/bearing of an aircraft is described by a point on the two-dimensional sphere \cite{dogandzic2001cramer}, and so on. In each of these examples, 
the unconstrained problem is typically ill-posed and does not admit a unique solution.
 As an alternative to constrained parameter estimation on vector spaces (which is an extrinsic viewpoint of the problem), one can consider an unconstrained estimation problem on \textit{manifold}-valued parameters (i.e., the intrinsic viewpoint). The pioneering work of \cite{smith2005covariance} derived an approximate Riemannian CRB for manifold-valued parameters, while \cite{boumal2013intrinsic} extended this theory to \textit{quotient manifolds} that can be realized as Riemannian submersions. Group-theoretic variants of the FIM and CRB were developed in \cite{chirikjian2010information,bonnabel2015intrinsic,bonnabel2015intrinsic2,boumal2014cramer,solo2020cramer} and thereafter applied to estimator design problems in robotics \cite{chen2021cramer}. While the aforementioned works ignored the higher-order curvature terms in their calculation, it was recently shown that an exact group-theoretic CRB can be derived for parameter estimation on matrix Lie groups \cite{labsir2023barankin}, which coincides with the Riemannian CRB when the group admits a bi-invariant metric\footnote{If a Lie group admits a bi-invariant metric, then its (group-theoretic) exponential map coincides with the Riemannian exponential map \cite[Corr. 3.19]{cheeger1975comparison}; the former is used to derive the group-theoretic CRB whereas the latter is used to derive the Riemannian CRB.}. 

While the specialization of the Fisher-Rao analysis from manifolds to Lie groups yields simple and tractable expressions for the FIM and CRB, Lie groups exclude many parameter spaces of interest. Spaces that typically arise in applications, such as spheres, SPD matrices, and the Grassmannian manifolds \cite{absil2008optimization}, are each examples of a larger class of spaces, namely, \textit{homogeneous spaces}. Previous works have derived the CRB for specific instances of homogeneous spaces, e.g., a CRB for SPD matrices was derived in \cite{breloy2018intrinsic}, while the CRB for the sphere has been of academic interest for decades due to its application in Direction-of-Arrival (DoA) estimation using radar and sonar arrays \cite{dogandzic2001cramer}. Despite the practical import and ubiquity of homogeneous spaces in engineering and control, the FIM and CRB for general homogeneous spaces have not been studied previously in the literature. Since homogeneous spaces are more general (but have less structure) than Lie groups, and are less general (but have more structure) than manifolds, they combine the computational ease of group-theoretic methods with the wide range of applicability of Riemannian methods.

In this article, we investigate a class of parameter estimation problems in which the parameter of interest takes values in a homogeneous space.
To address the aforementioned gaps in the literature, we make the following contributions:
\begin{enumerate}
    \item We introduce group-theoretic definitions of the FIM and estimation error for parameter estimation problems on homogeneous spaces, which are typically easier to compute than their Riemannian counterparts.
    \item By investigating the properties of the FIM, we arrive at exact and approximate group-theoretic CRBs that bound the performance of biased and unbiased estimators.
    \item The conditions under which the group-theoretic CRB coincides with the Riemannian CRB are presented. In particular, we show that there is an intrinsic definition of the \textit{variance} of the estimator for a special class of homogeneous spaces.
    \item The theory is used to design an iterative algorithm for computing the Maximum Likelihood Estimator (MLE) on homogeneous spaces, which we call the \textit{generalized Fisher scoring} algorithm.\footnote{To the authors' knowledge, the Fisher scoring algorithm for Lie groups has not been presented in the existing literature either, whereas the Euclidean variant of the algorithm is well-known \cite{Li2019}.}
    \item Using illustrative examples, we show that the theory developed in this article can be used to systematically study inference problems
    wherein the observations are invariant with respect to certain transformations of the parameters.
    This distinctive feature of the Fisher-Rao analysis on homogeneous spaces is not shared by Euclidean spaces and Lie groups.
\end{enumerate}
The homogeneous spaces of interest to us include those that arise in engineering and control applications, which are finite dimensional and can be realized as a quotient of a real matrix Lie group by a closed subgroup -- i.e., they are of the form $G/H$. Since any Lie group $G$ is topologically isomorphic to $G/\lbrace e \rbrace$ (where $e \in G$ is the identity element), the results in this article are a generalization of the existing works on the FIM and CRB on Lie groups \cite{bonnabel2015intrinsic,bonnabel2015intrinsic2,solo2020cramer,labsir2023barankin}.
Unlike the above works, wherein the FIM is realized as a matrix with respect to a fixed (either left or right-invariant) frame of vector fields, the analysis on homogeneous spaces requires us to make the distinction between the left and right-invariant frames of $G$. 

\subsubsection*{Organization of the Article}
In Section \ref{sec:preliminaries}, we formulate the parameter estimation problem, provide some classical examples, and present preliminary details. The Fisher Information Metric (FIM) is introduced in Section \ref{sec:FIM}, and some properties of the corresponding matrices are established. Section \ref{sec:CRB} develops the Cram\'er-Rao bounds (CRBs) on Lie groups and homogeneous spaces and presents various corollaries.
Section \ref{sec:applications} discusses the applications of the Fisher-Rao theory, presents the generalized Fisher scoring algorithm, and provides illustrative examples.
Finally, Section \ref{sec:conclusion} presents the conclusion.
\subsubsection*{Notation}
We use $e$ to denote the identity element of a Lie group $G$. The tangent space of $G$ at $e$ is written as $T_eG$ and identified (as a vector space) with the Lie algebra $\mathfrak g$.
Capital letters like $X,Y$, and $Z$ denote tangent vectors in $\mathfrak g$, whereas small boldface letters like $\mf x$, $\mf y$, and $\mf z$ are reserved for vectors in $\mathbb R^n$. 
Capital boldface letters represent matrices. The Einstein summation convention is used in the proofs; see \cite[p. 18]{lee2012smooth} for a review.

\section{Preliminaries}\label{sec:preliminaries}

\subsection{Problem Formulation}\label{sec:formulation}
Let $\theta \in \Theta$ be a parameter of interest and $x \in \mathcal X$ denote a measurement of $\theta$, where $\Theta$ is called the \textit{parameter space} and $\mathcal X$ is called the \textit{sample space}. We assume that $\theta$ and $x$ are related to each other via a statistical model: the probability density function (pdf) of $x$ given the parameter $\theta$ is $\bar f(x\,|\,\theta)$. The goal of the parameter estimation problem is to design an estimator $\hat \theta :\mathcal X \rightarrow \Theta$, such that $\hat\theta(x)$ is an estimate of $\theta$ based on the observation of $x$. 

We assume that the \textit{likelihood function} $\bar f(x\,|\,\cdot\,)$ is smooth (on an open set of $\Theta$ containing $\theta$) for all $x\in \mathcal X$. Furthermore, we will use the shorthand notation $\bar f(\theta)\coloneq \bar f(\,\cdot\,|\,\theta)$ throughout, treating $\bar f(\theta)$ as a function on $\mathcal X$. Not much is assumed about the sample space $\mathcal X$ except that it admits an integration measure $\mu$ that does not depend on $\theta$, such that the density $\bar f(\theta)$ is defined with respect to (w.r.t.) $\mu$.\footnote{These are the so-called \textit{regularity conditions} that are commonly imposed in the literature, allowing us to interchange the differential operators of $\Theta$ with the integration on $\mathcal X$ \cite[Sec. 1.8]{Li2019}.}

\begin{remark}[Discrete Sample Spaces]
    One can even consider the situation where the measurements are not continuous but discrete/quantized, in which case $\mu \coloneq \mu_c$ is the counting measure, $\mathcal X$ is a discrete space, and integration is replaced by summation \cite{fu2018quantized,seveso2020quantum}. 
\end{remark}

We are interested in the case where $\Theta$ is a homogeneous space equipped with a transitive left $G$-action, where $G$ is a real matrix Lie group. 
Consequently, we can identify $\Theta$ with the topological quotient of $G$ by a closed subgroup $H \subset G$, written as $\Theta \cong G/H$ \cite[Thm. 21.18]{lee2012smooth}. The space $G/H$ is an example of a \textit{coset space}. A point on $G/H$ is called a \textit{coset} (or an \textit{equivalence class}), defined as follows: 
\begin{align}
gH \coloneq \lbrace gh \,|\, h \in H \rbrace \in G/H,
\end{align}
where $g$ is called a \textit{representative} of $gH$.
If a given pair of elements $g_1, g_2 \in G$ differ by an element of $H$, then they are representatives of the same point on $G/H$:
$$g_1H = g_2H \ \Leftrightarrow\  g_1^{-1}g_2 \in H.$$In particular, $gH = ghH$ for all $h\in H$. We let $L_g$ (resp., $R_g$) denote the left (resp., right) multiplication operation of $G$, so that $L_g(h) = gh$ for $g,h \in G$.

\subsection{Some Concrete Examples}
\subsubsection*{Example 1 (Direction-of-Arrival)} Consider the problem where $\theta$ describes the Direction-of-Arrival (DoA) of a signal (e.g., radar signals being reflected from an aircraft) and $x$ is a vector of measurements of the signal obtained across a radar array situated on the ground. This model is considered in \cite{dogandzic2001cramer}, where $\theta$ is called the DoA parameter and $x$ is called the array response. In the intrinsic setting, we can represent $\theta$ as a point on the {sphere}, $\Theta \coloneq S^2$, which is topologically the same as $SO(3)/SO(2)$. 
Hence, the intrinsic description of $\theta$ is $2$-dimensional, and each point on $\Theta$ corresponds to a unique value of DoA. 


\subsubsection*{Example 2 (Symmetric Positive Definite Matrices)} Let the space of $n\times n$ {symmetric positive definite (SPD) matrices} be denoted by $\mathbb S(n)^{++}$.
In \textit{diffusion tensor imaging}, MRI observations are used to 
estimate the diffusion tensor (represented at each point as an SPD matrix) to reconstruct the internal structure of tissues, using a statistical model such as \cite[eq. 2]{magdoom2021new}. SPD matrices are also used to describe the \textit{shape} or \textit{scatter} of distributions on Euclidean spaces (i.e., $\mathcal X = \mathbb R^n$) \cite{breloy2018intrinsic}, generalizing the use of covariance matrices to describe Gaussian distributions.

We can describe $\mathbb S(n)^{++}$ as the coset space $GL(n,\mathbb R)^+/SO(n)$,\footnote{Here, $GL(n,\mathbb R)^+$ denotes the \textit{general linear group} consisting of $n\times n$ real invertible matrices whose determinant is positive \cite{chirikjian2011stochastic}.} which follows from the observation that any invertible matrix ${\bf A}\in GL(n, \mathbb R)^+$ can be uniquely decomposed using the {polar decomposition} to obtain ${\bf A} = {\bf S} {\bf R}$, where ${\bf S}\in \mathbb S(n)^{++}$ and ${\bf R}\in SO(n)$. 
We can verify the property $gH = gh H$ of coset spaces as follows. If $g = \mf A$ and $h=\tilde{\mf R}\in SO(3)$, then
$gh = {\bf A}\tilde{\bf R} = {\bf S} {\bf R}\tilde{\bf R}$. Since polar decompositions of $g$ and $gh$ both yield the same symmetric matrix, $gH$ and $ghH$ represent the same point (namely, the point $\mf S$) on $\mathbb S(n)^{++}$. A detailed account of the space $\mathbb S(n)^{++}$ treated as a homogeneous space can be found in \cite[Sec. 23.9]{gallier2020differential}.


\subsection{Differential Structure of $G/H$}\label{sec:GH-structure}
\subsubsection{The Projection Map}
Let $\pi:G \rightarrow G/H$ denote the `projection' operation $g \mapsto gH$ which sends a group element to its coset.
Our approach is to pull back the likelihood function $\bar f$ via $\pi$ to define a likelihood function $f$ on $G$:
\begin{align}
    f(x|g) \coloneq \bar f\left(x | \pi (g)\right) = \bar f (x | gH),
    \label{eq:pullback}
\end{align}
i.e., $f = \bar f \circ \pi$, where $\circ$ denotes the composition of functions (see Fig. \ref{fig:homogeneous_space}). The inverse image under $\pi$ of a point $gH \in G/H$ is written as $\pi^{-1}(gH)$. It is diffeomorphic to (i.e., has the same smooth manifold structure as) $H$, and is referred to as the \textit{fiber} on $gH$.

\begin{figure}[htbp]
    \centering
    \includegraphics[width=0.5\textwidth,trim={17.cm 6cm 17cm 2cm},clip]{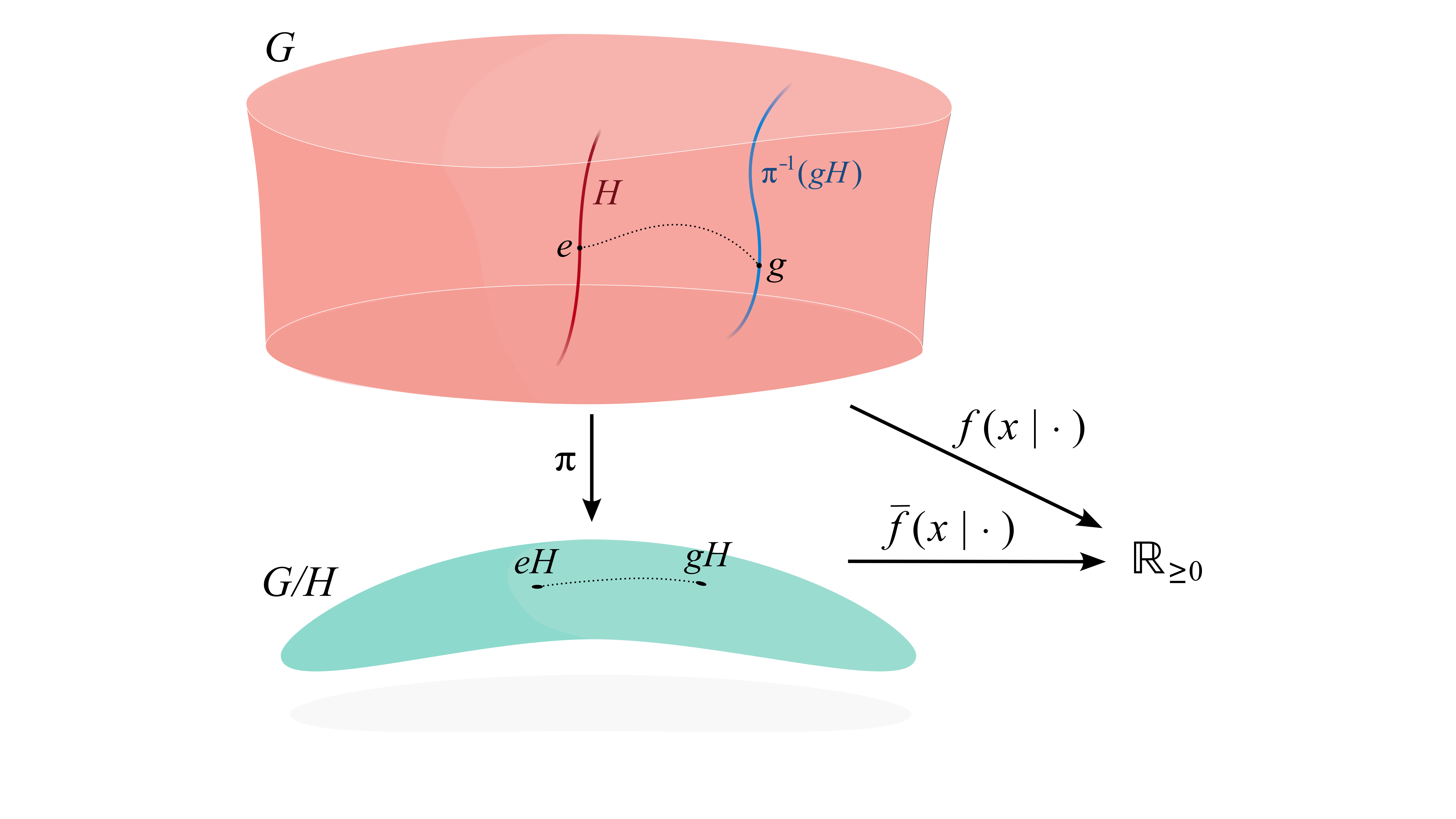}
    \caption{The map $\pi$ is a `projection' from $G$ to $G/H$ which maps $g\in G$ to its coset $\pi(g)=gH$. The statistical model $\bar f$ can be \textit{pulled back} via $\pi$ to define $f$ via (\ref{eq:pullback}). The inverse image under $\pi$ of the point $gH \in G/H$ is called the \textit{fiber} on $gH$.}
    \label{fig:homogeneous_space}
\end{figure}

\subsubsection{Tangent Spaces}
Let $\mathfrak g$ denote the Lie algebra of $G$, and $\mathfrak h \subset \mathfrak{g}$ the Lie subalgebra corresponding to the subgroup $H \subset G$.  
Let $d\pi_e$ denote the \textit{differential} (or \textit{pushforward map}) of $\pi$ at $e$. That is, given that $X\in \mathfrak g$ is the initial velocity of the curve $\exp(tX)$ on $G$, 
\begin{align}
    d\pi_e(X) = \frac{d}{dt} \pi\big( \exp(tX)\big)\bigg|_{t=0} = \frac{d}{dt} \exp(tX)H\bigg|_{t=0}
\end{align}
is the initial velocity of the image of the curve under $\pi$, which lies in $G/H$ and passes through $eH$ at $t=0$. Observe that $d\pi_e(Y)=0$ for all $Y\in \mathfrak h$.

In this article, we assume that $G/H$ is \textit{reductive} \cite[Def. 23.8]{gallier2020differential}, which means that $\mathfrak g$ can be decomposed as a direct sum $\mathfrak g = \mathfrak h \oplus \mf m$, where $\mf m$ is a subspace that is invariant under the adjoint action of $H$: $\Ad_h(\mf m)\subseteq \mf m$ for all $h\in H$. Note that each of the homogeneous spaces mentioned thus far in the article is reductive.
For example, an invertible matrix $\mf A \in GL(n)$ can be decomposed into its skew-symmetric and symmetric parts, as $\mf A = \tfrac{1}{2}(\mf A - \mf A^\top) + \tfrac{1}{2}(\mf A + \mf A^\top)$, which is a reductive decomposition for $\mathbb S(n)^{++}$.

\subsubsection{Invariant Vector Fields}
Given a vector $X\in \mathfrak g$, we let $X^L$ denote the left-invariant vector field (LIVF) generated by $X$, i.e., $X^L_g = (dL_g)_e(X) = gX$ for all $g\in G$. Similarly, $X^R_g=Xg$ is the corresponding right-invariant vector field (RIVF) evaluated at $g\in G$. The flows of LIVFs (resp., RIVFs) are given by right (resp., left) translations, so that each of them defines a derivative operator at $g$:
\begin{align}
    \left(X^L f\right)(g) &= \frac{d}{dt} f(g\exp(tX))\bigg|_{t=0} \label{eq:left-derivative}\\ 
    \left(X^R f\right)(g) &= \frac{d}{dt} f(\exp(tX)g)\bigg|_{t=0}
\end{align}
where $f$ is a smooth function on $G$. \footnote{Note that the derivative operator corresponding to $X^L$ is denoted as $\tilde X^r$ in \cite{chirikjian2011stochastic}, with the letter $r$ signifying that it involves right translation of $g$, i.e., `$\exp(tX)$' appears on the right of $g$ in (\ref{eq:left-derivative}) \cite{chirikjian2011stochastic}. Our notation and terminology follow those of \cite[Sec. 19.2]{gallier2020differential} instead.} The LIVFs and RIVFs are related by the \textit{adjoint action} of $G$ on $\mathfrak g$ \cite{chirikjian2011stochastic}, as follows:
\begin{align}
    X^R_g = (\Ad_{g^{-1}} X)^L_g \quad \text{and} \quad X^L_g = (\Ad_{g} X)^R_g.
    \label{eq:LIVF-RIVF}
\end{align}

\subsubsection{Bases}
Let $n_G$, $n_H$ and $n_{\Theta}$ be the dimensions of $\mathfrak g$, $\mathfrak h$ and $\mf m$ (or equivalently, those of $G$, $H$, and $G/H$), respectively, satisfying $n_G=n_H+n_{\Theta}$.
We choose a basis $\lbrace E_i\rbrace_{i=1}^{n_G}$ of $\mathfrak g$ such that $\lbrace E_i\rbrace_{i=1}^{n_H}$ is a basis of $\mathfrak h$ and $\lbrace E_i\rbrace_{n_H+1}^{n_G}$ a basis of $\mf m$. 
It is readily verified that the following vectors are tangent to the fiber on $gH$ (i.e., the vertical strand through $g$ in Fig. \ref{fig:homogeneous_space}):
\begin{align}
\big\lbrace E_{i,g}^L\big\rbrace_{i=1}^{n_H} \textit{ is a basis for }  \ker \left(d\pi_g\right) 
\end{align}
whereas its complement provides a basis for the tangent space of $G/H$ at $gH$:
\begin{align}
    \big\lbrace d\pi_g \left(E_{i,g}^L \right) \big\rbrace_{i=n_H+1}^{n_G} \textit{ is a basis for $T_{gH}(G/H)$}.
    \label{eq:LIVF_basis_on_GH}
\end{align}
In the above, $E_{i,g}^L=g E_i$, i.e., the LIVF generated by $E_i$ evaluated at $g$. 

Lastly, the map $(\,\cdot\,)^\vee: \mathfrak g \rightarrow \mathbb R^{n_G}$ is used to identify a vector in $\mathfrak g$ with its components in the basis $\lbrace E_i \rbrace_{i=1}^{n_G}$, while $(\,\cdot\,)^\wedge$ denotes the inverse of $(\,\cdot\,)^\vee$. The choice of basis determines an inner product (and vice versa), since for $X_1, X_2 \in \mathfrak g$, we may define $\langle X_1, X_2 \rangle \coloneqq ({X_1^\vee})^\top X_2^\vee$. 


\section{Fisher Information Metric (FIM)}\label{sec:FIM}
Recall that $f = \bar f \circ \pi$ is the pullback likelihood function. 
We use the following shorthand notation:
$$\bar \ell \coloneq \log \bar f\quad \text{and}\quad \ell\coloneq \log f$$
which represent the log-likelihood function and its pullback, respectively.
Given two vectors $V_1, V_2 \in T_g G$,
the \textit{Fisher Information Metric (FIM)}\footnote{Note that $F$ is technically not a Riemannian metric on $G$. Since the FIM of the pullback statistical model $f$ on $G$ is degenerate, it gives $G$ the structure of a \textit{sub-Riemannian manifold} \cite{strichartz1986sub}.
} at $g$ is defined as follows:
\begin{align}
F(V_1,V_2)_g &\coloneq \int_{\mathcal X} \big(V_1\ell(g) \; V_2 \ell(g) \big) f(g) \, d\mu \nonumber\\
&= \mathbb E \left[ V_1 \ell (g)\,  V_2\ell(g)\right]
\label{eq:fim}
\end{align}
where $\mathbb E$ denotes the expectation taken w.r.t. the density $f(g) = f(\,\cdot\,|g)$ and $V_1 \ell$ is the action of $V_1$ on $\ell$ as a derivative operator \cite{smith2005covariance}. The FIM at $g$ can be represented as a matrix after choosing the basis $\lbrace E_{i,g}^L \rbrace_{i=1}^{n_G}$ for $T_g G$, giving us the \textit{left Fisher Information Matrix} (left {\bf FIM}) at $g$. Its $(i,j)$-th entry is given by
\begin{align}
    \mf F_{ij,g}^L 
    &\coloneq F\big(E_{i,g}^L,\, E_{j,g}^L\big)_g 
    = \mathbb E \left[ E_i^L\ell(g)\, E_j^L \ell(g) \right].
    \label{eq:fim_matrix_left}
\end{align}
An analogous definition involving the RIVFs gives us $\mf F_{g}^R$, the right {\bf FIM} at $g$. Clearly, the two matrices $\mf F_{g}^L$ and $\mf F_{g}^R$ encode the same information; they both describe the metric defined in (\ref{eq:fim}), albeit using different bases.\footnote{This is identical to how, given a choice of basis at some point of a Riemannian manifold, the Riemannian metric is completely described by a matrix of coefficients known as the \textit{metric tensor}.}
When $G$ is abelian (i.e., commutative), we have $\mf F_{g}^L = \mf F_{g}^R$.


Our objective is to study the {FIM} of the original statistical model $\bar f$ on $G/H$. Given two vectors $\bar V_1, \bar V_2 \in T_{gH}G/H$, the FIM at $gH$ is
\begin{align}
    \bar F(\bar V_1,\bar V_2)_{gH} &\coloneq \mathbb E \left[ \bar V_1 \bar \ell (gH)\, \bar V_2\bar \ell(gH)\right].
    \label{eq:fim_on_GH}
\end{align}
We can define the corresponding {\bf FIM} using the basis given in (\ref{eq:LIVF_basis_on_GH}):
\begin{align}
    \overline{\mf F}_{ij,g} &\coloneq \bar {F}\Big(\,
    d\pi_g\left(E_{i',g}^L\right)\,,\;d\pi_g\left(E_{j',g}^L\right)
    \,\Big)_{gH},
    \label{eq:fim_matrix_on_GH}
\end{align}
where $i' = n_H+i$ and $j' = n_H+j$. 
When the FIM given in (\ref{eq:fim_on_GH}) is non-degenerate (or equivalently, when the {\bf FIM} in (\ref{eq:fim_matrix_on_GH}) is non-singular), it defines an inner product on $T_{gH}(G/H)$ as well as a
 Riemannian metric\footnote{Succinctly, a Riemannian metric is a `smoothly varying' inner product on the tangent spaces of a manifold. The smoothness of the FIM follows from our assumption that the statistical model $\bar f$ is smooth.} on $G/H$. 
 In this way, it endows $G/H$ with a unique geometric structure known as the \textit{Fisher-Rao geometry} \cite{liang2019fisher,amari1998natural}.

The relationships between the matrices, $\mf F_{g}^L$, $\mf F_{g}^R$, and $\overline{\mf F}_{g}$, are made explicit in the following lemma.

\begin{lemma}[Properties of the {\bf FIM}s]
    The following are true:
    \begin{enumerate}
        \item The left {\bf FIM}, viewed as a block matrix, is of the form
        $$
        \mf F_{g}^L = \begin{bmatrix}
        \mf 0 & \mf 0 \\
        \mf 0 & \overline{\mf F}_{g}
        \end{bmatrix}
        $$
        \item The right {\bf FIM} is constant along the fibers:
        $$\mf F_{g}^R = \mf F_{gh}^R\ \forall h\in H $$
        \item The left and right {\bf FIM}s are related as follows:
        \begin{align}
            \mf F_{g}^L = {\boldsymbol \Ad}_{g}^\top \mf F_{g}^R {\boldsymbol \Ad}_{g}
        \end{align}
        where ${\boldsymbol \Ad}_{g}$ denotes $\Ad_g$ written as a matrix w.r.t. the basis $\lbrace E_i \rbrace_{i=1}^{n_G}$.
    \end{enumerate}
    \label{lem:fim-properties}
\end{lemma}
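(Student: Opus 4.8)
The plan is to prove the three items in sequence, all of them resting on one elementary fact: since $f = \bar f \circ \pi$, the pulled-back log-likelihood $\ell(\,\cdot\,|g)$ depends on $g$ only through the coset $gH$, i.e. it is constant along the fibers of $\pi$. Combined with the chain rule for the pushforward $d\pi_g$ and the adjoint relation (\ref{eq:LIVF-RIVF}) between left- and right-invariant frames, this fact yields everything.

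For item 1, first take any $Y\in\mathfrak h$: the curve $t\mapsto g\exp(tY)$ lies in the fiber $\pi^{-1}(gH)$ because $\exp(tY)\in H$, so $\ell(\,\cdot\,|g\exp(tY))\equiv\ell(\,\cdot\,|g)$ and hence $E_i^L\ell(g)=0$ for every $i\le n_H$ and every $x$. Substituting into (\ref{eq:fim_matrix_left}) makes $\mf F^L_{ij,g}$ vanish whenever $i\le n_H$ or $j\le n_H$, which is the stated block-zero pattern. For the surviving lower-right block one applies the chain rule $(V\ell)(g)=\big(d\pi_g(V)\big)\bar\ell(gH)$, valid for any $V\in T_gG$ because $\ell=\bar\ell\circ\pi$; taking $V=E^L_{i',g}$ with $i'=n_H+i$ turns each factor in (\ref{eq:fim_matrix_left}) into the corresponding factor of (\ref{eq:fim_matrix_on_GH}), and since the density $f(\,\cdot\,|g)=\bar f(\,\cdot\,|gH)$ is the same in both expectations, we get $\mf F^L_{i'j',g}=\overline{\mf F}_{ij,g}$.

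For item 2, observe that for $h\in H$ we have $\exp(tX)ghH=\exp(tX)gH$, hence $f(\,\cdot\,|\exp(tX)gh)\equiv f(\,\cdot\,|\exp(tX)g)$; differentiating at $t=0$ gives $E_i^R\ell(gh)=E_i^R\ell(g)$ as functions on $\mathcal X$, while also $f(\,\cdot\,|gh)=f(\,\cdot\,|g)$, so the integrals defining $\mf F^R_{gh}$ and $\mf F^R_g$ agree term by term. For item 3, use (\ref{eq:LIVF-RIVF}) in the form $E^L_{i,g}=(\Ad_g E_i)^R_g$ and expand $\Ad_g E_i$ in the basis $\{E_k\}$, so that $E_i^L\ell(g)=({\boldsymbol\Ad}_g)_{ki}\,E_k^R\ell(g)$ (Einstein summation); plugging this into (\ref{eq:fim_matrix_left}) and pulling the $x$-independent scalars out of the expectation yields $\mf F^L_{ij,g}=({\boldsymbol\Ad}_g)_{ki}\,\mf F^R_{kl,g}\,({\boldsymbol\Ad}_g)_{lj}$, which is exactly ${\boldsymbol\Ad}_g^\top\mf F^R_g{\boldsymbol\Ad}_g$.

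None of the three steps involves a serious difficulty; the only place demanding a little care is item 1, where one must line up the basis (\ref{eq:LIVF_basis_on_GH}) of $T_{gH}(G/H)$ against the lower-right block of the left {\bf FIM} with the correct index shift $i\mapsto n_H+i$, and verify that the chain rule genuinely identifies the two expectations (same integrand, same measure $\mu$, same density) rather than merely relating them by a change of variables.
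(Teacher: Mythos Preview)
Your proposal is correct and follows essentially the same approach as the paper: item~1 via the constancy of $\ell$ along $t\mapsto g\exp(tE_i)$ for $E_i\in\mathfrak h$ together with the chain rule $E_{i,g}^L\ell(g)=d\pi_g(E_{i,g}^L)\,\bar\ell(gH)$, item~2 by the analogous observation for right-invariant fields, and item~3 via the adjoint relation~(\ref{eq:LIVF-RIVF}). The only cosmetic difference is that for item~3 the paper verifies $\mf x^\top\mf F^L_g\mf x=\mf x^\top\boldsymbol\Ad_g^\top\mf F^R_g\boldsymbol\Ad_g\,\mf x$ for arbitrary $\mf x$ (a quadratic-form argument, sufficient since both sides are symmetric), whereas you compute the $(i,j)$ entry directly; the content is identical.
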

\begin{proof}
    Using (\ref{eq:fim_matrix_left}), we have
    \begin{align*}
        \mf F_{ij,g}^L
        = \mathbb E \left[\frac{d}{dt} \ell\big(g \exp(t E_i)\big)\bigg|_{t=0}\, \frac{d}{dt} \ell\big(g \exp(t E_j)\big)\bigg|_{t=0} \right]
    \end{align*}
    When $i \leq n_H$, we have $E_i \in \mathfrak h$, so that 
    $$\ell\big(g \exp(t E_i)\big) = \ell\big(g h(t)\big) = \ell(g)$$ 
    for some curve $h(t)\in H.$ 
    Thus, $\mf F_{ij,g}^L$ is zero when either $i$ or $j$ is less than $n_H$.
    Next, we focus on the matrix $\overline {\mf F}_g$. Observe that
    \begin{align}
        \left[d\pi_g\, E_{i,g}^L\right]\bar \ell(gH)= E_{i,g}^L\left(\bar \ell\circ \pi \right)(g) = E_{i,g}^L\ell(g)
    \end{align}
    which follows from the definition of the differential. 
    By substituting the above in (\ref{eq:fim_matrix_on_GH}), we see that $\overline {\mf F}_g$ is a sub-matrix of $\mf F_g^L$, which proves property 1. Property 2 is proved similarly.

    To show property 3, let $\mf x = [x^1\ x^2\ \cdots\ x^{n_G}]^\top$ be an arbitrary vector in $\mathbb R^{n_G}$ and $X \coloneq \mf x^\wedge$. Using the Einstein summation convention, we have $X^L = x^i E_i^L$. 
    Since $F$ is linear in either argument, we have
    \begin{align}
        \mf x^\top \mf F_{g}^L \,\mf x &= F(x^iE_{i,g}^L, x^jE_{j,g}^L)_g = F(X_g^L, X_g^L).
    \end{align}
    On the other hand, using (\ref{eq:LIVF-RIVF}), we get 
     \begin{align}
        \mf x^\top {\boldsymbol \Ad}_{g}^\top \mf F_{g}^R {\boldsymbol \Ad}_{g} \mf x &= F\left((\Ad_g X)_g^R, (\Ad_g X)_g^R\right)\\
        &= F\left(X^L_g, X^L_g\right)
    \end{align}
    which concludes the proof.
\end{proof}
\noindent
Combining properties 2 and 3, we obtain
\begin{align}
 \mf F_{gh}^L = \bm\Ad_{h}^\top \mf F_{g}^L \bm\Ad_{h},
 \label{eq:fimL-gh}
\end{align}
i.e., the left {\bf FIM} is \textit{not} constant along the fibers in general. Evidently, the left and right {\bf FIM}s of $G$ have complementary properties.

\section{Cram\'er-Rao Bound (CRB)}\label{sec:CRB}
\subsection{The CRB on Lie Groups}\label{sec:CRB-on-G}
Let $\hat g: \mathcal X \rightarrow G$ be an \textit{estimator} on $G$ which maps a given observation in $\mathcal X$ to a unique point in $G$.
The \textit{left-invariant error} of $\hat g$ w.r.t. $g$ is defined as $\eta(g) \coloneq \log (g^{-1} \hat g)$ \cite{barrau2016invariant}.
We define the (left) \textit{bias vector} of $\hat g$ as
\begin{align}
    B(g) \coloneq \mathbb E [\,\eta (g)\,] = \mathbb E [\,\log (g^{-1}\hat g)\,] \in \mathfrak g,
    \label{eq:bias}
\end{align}
where $\log(\,\cdot\,)$ denotes the log map of $G$, i.e., the matrix logarithm.\footnote{We assume that the $\log$ operation in (\ref{eq:bias}) is well-defined. For e.g., in the case of $SO(3)$, $\log$ is well-defined everywhere but on a negligible (measure-zero) subset \cite{chirikjian2011stochastic}.} The error and bias can be represented using the vectors $\bm \upeta(g) \coloneq \eta(g)^\vee$ and $\mf b(g) \coloneq B(g)^\vee$, and the (left) \textit{covariance matrix} of $\hat g$ may be defined in analogy with the Euclidean case:
\begin{align}
    \mf\Sigma(g) = \mathbb E \left[\big( \bm\upeta(g) - \mf b(g) \big)\big( \bm\upeta(g) - \mf b(g) \big)^\top\right].
    \label{eq:covariance}
\end{align}
The adjoint action is used to convert the left-invariant error to the right-invariant error: $\bm\upeta'(g) \coloneq \bm\Ad_g \bm\upeta(g) = \log( \hat g g^{-1})^\vee$. Let $\mf b'(g)$ and $\mf\Sigma'(g)$ denote the corresponding (right-invariant) bias and covariance, respectively.

We define $\mf J_{\mf b}(g)$ as the Jacobian of the map $\mf b: G \rightarrow \mathbb R^n$ whose components are $\left[\mf J_{\mf b}(g)\right]_{ij} = E_j^Lb^i(g)$, with $b^i(g)$ representing the $i^{th}$ component of $\mf b(g)$. Similarly, we define the Jacobian of the right bias as $\mf J_{\mf b'}(g)$, where $\left[\mf J_{\mf b'}(g)\right]_{ij}=E_j^R{b'}^i(g)$. We can now state the group-theoretic CRB for biased estimators on $G$.

\begin{theorem}[CRB on Lie Groups]
    Let $\hat g$ be an estimator on $G$.
    The left and right covariances of $\hat g$ (i.e., $\mf \Sigma(g)$ and $\mf \Sigma'(g)$, respectively) satisfy the following inequalities:
    \begin{align}
         \bm \Phi(g){\mf F^L_g}^\dagger \,\bm \Phi(g)^\top 
    \ &\;\preccurlyeq\;  \mf \Sigma(g)\label{eq:left-crb}\\
    \bm \Phi'(g){\mf F^R_g}^\dagger \,\bm \Phi'(g)^\top
       &\;\preccurlyeq\; \mf \Sigma'(g)
        \label{eq:right-crb}
    \end{align}
    where $\bm \Phi(g) \coloneq \mathbb E \big[\boldsymbol\Psi_{-\eta(g)}\big] + \mf J_{\mf b}(g)$, $\,
        \bm \Phi'(g) \coloneq \mathbb E \big[\boldsymbol\Psi_{\eta'(g)}\big] + \mf J_{\mf b'}(g)$,
        $\;\mf \Psi_{(\,\cdot\,)}$ is the matrix defined in Lemma \ref{lem:derivative-log} in Appendix \ref{app:proof-crb-group}, and $(\,\cdot\,)^\dagger$ denotes the pseudoinverse.
    \label{thm:crb-group}
\end{theorem}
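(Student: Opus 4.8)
The plan is to mimic the classical Euclidean Cramér-Rao argument, working on the Lie group $G$ with the left-invariant frame $\lbrace E^L_{i,g}\rbrace$, and to handle the non-Euclidean nature of the error map $\eta(g) = \log(g^{-1}\hat g)$ by differentiating it carefully. The core identity behind any CRB is that a certain cross-covariance equals a Jacobian, obtained from the regularity condition $\mathbb E[\nabla \ell] = 0$; everything then follows from positive semidefiniteness of a suitable covariance matrix. So first I would establish the \emph{sensitivity identity}
$$
\mathbb E\big[\bm\upeta(g)\, (E^L_j \ell(g))\big] = \bm\Phi(g)_{\cdot j},
$$
i.e. that the matrix whose $(i,j)$ entry is $\mathbb E[\upeta^i(g)\, E^L_j\ell(g)]$ is exactly $\bm\Phi(g) = \mathbb E[\bm\Psi_{-\eta(g)}] + \mf J_{\mf b}(g)$. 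This is done by applying $E^L_j$ (the derivative operator of \eqref{eq:left-derivative}) to the defining integral $\int_{\mathcal X}\bm\upeta(g) f(g)\,d\mu = \mf b(g)$, pushing the derivative under the integral (regularity), and using the product rule. The term where $E^L_j$ hits $f(g)$ produces $\mathbb E[\bm\upeta(g)\, E^L_j\ell(g)]$ since $E^L_j f = (E^L_j\ell) f$; the term where $E^L_j$ hits $\bm\upeta(g)$ produces $\mathbb E[E^L_j(\bm\upeta(g))]$, and the left side's derivative is $\mf J_{\mf b}(g)$ by definition. The quantity $E^L_j \bm\upeta(g) = E^L_j \log(g^{-1}\hat g)^\vee$ is where Lemma~\ref{lem:derivative-log} enters: differentiating $g\mapsto\log(g^{-1}\hat g)$ along the left-invariant direction $E_j$ introduces the right trivialization of the differential of $\log$, evaluated at $-\eta(g)$ (the sign coming from differentiating $g^{-1}$), which is precisely $\bm\Psi_{-\eta(g)}$ acting on the $j$-th basis vector. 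Taking expectations gives $\mathbb E[\bm\Psi_{-\eta(g)}]$, and rearranging yields the identity.

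Next I would run the standard argument: consider the random vector $\big(\bm\upeta(g) - \mf b(g)\big) \oplus \big(\nabla^L\ell(g)\big) \in \mathbb R^{n} \oplus \mathbb R^{n}$, where $\nabla^L\ell(g)$ has components $E^L_i\ell(g)$. Its covariance is the block matrix
$$
\begin{bmatrix} \mf\Sigma(g) & \bm\Phi(g) \\ \bm\Phi(g)^\top & \mf F^L_g \end{bmatrix} \succcurlyeq 0,
$$
where the off-diagonal block is the sensitivity identity above (the $\mf b(g)$ shift is harmless since $\mathbb E[\nabla^L\ell] = 0$ by regularity, so subtracting it does not change the cross term) and the bottom-right block is $\mf F^L_g$ by \eqref{eq:fim_matrix_left}. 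The generalized Schur complement condition for a PSD block matrix — valid even when $\mf F^L_g$ is singular, provided a range condition holds — then gives $\mf\Sigma(g) \succcurlyeq \bm\Phi(g)\,{\mf F^L_g}^\dagger\,\bm\Phi(g)^\top$, which is \eqref{eq:left-crb}. For the right inequality \eqref{eq:right-crb}, I would either repeat the whole argument with the right-invariant frame $E^R_i$, the right error $\eta'(g) = \log(\hat g g^{-1})$, and $\bm\Phi'(g)$ (the sign flip disappears because one now differentiates $g\mapsto \log(\hat g g^{-1})$ along a right-invariant direction and it is $g^{-1}$ on the right that contributes), or — more economically — conjugate the left inequality by $\bm\Ad_g$, using $\bm\upeta' = \bm\Ad_g\bm\upeta$, $\mf F^L_g = \bm\Ad_g^\top \mf F^R_g\bm\Ad_g$ from Lemma~\ref{lem:fim-properties}(3), and compatibility of the pseudoinverse with this congruence on the relevant subspace.

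The main obstacle is the bookkeeping around the differential of the matrix logarithm and the pseudoinverse. Concretely: (i) verifying that $E^L_j\log(g^{-1}\hat g)^\vee$ really equals $\bm\Psi_{-\eta(g)}$ applied to the $j$-th coordinate vector requires the explicit series/integral formula for $d(\log)$ and a clean choice of trivialization — this is the content of the deferred Lemma~\ref{lem:derivative-log}, so I would invoke it rather than re-derive it, but I must make sure the left-vs-right trivialization and the sign of $\eta$ are matched consistently with the statement; and (ii) the Schur-complement step needs the range inclusion $\operatorname{range}(\bm\Phi(g)^\top) \subseteq \operatorname{range}(\mf F^L_g)$ so that ${\mf F^L_g}^\dagger$ behaves like a genuine inverse on the relevant subspace — this should follow from the fact that $\bm\Phi(g)^\top = \mathbb E[\cdots \nabla^L\ell]$ lives in the span of the score, hence in $\operatorname{range}(\mf F^L_g)$, which I would note explicitly. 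A secondary subtlety is that $\mf F^L_g$ is degenerate (it is zero on the $\mathfrak h$-block by Lemma~\ref{lem:fim-properties}(1)), so the pseudoinverse is genuinely needed and the bound is only informative on the $\mf m$-directions; this is expected and consistent with the homogeneous-space picture developed later, but worth a sentence.
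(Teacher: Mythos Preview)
Your proposal is correct and follows essentially the same route as the paper: both differentiate the bias relation $\int_{\mathcal X}\bm\upeta(g)f(g)\,d\mu=\mf b(g)$ along left-invariant directions, invoke Lemma~\ref{lem:derivative-log} to identify $E^L_j\bm\upeta(g)$ with $-\bm\Psi_{-\eta(g)}\mf e_j$ (mind that extra minus sign when you write it out), and thereby obtain the cross-covariance identity $\mathbb E[(\bm\upeta-\mf b)(\nabla^L\ell)^\top]=\bm\Phi(g)$. The only packaging difference is that the paper extracts the inequality via Cauchy--Schwarz with the specific choice $\mf x={\mf F^L_g}^\dagger\bm\Phi(g)^\top\mf y$, whereas you use the generalized Schur complement of the joint covariance---these are equivalent; for \eqref{eq:right-crb} the paper takes your option (a) and simply repeats the argument with $X^R$, which is advisable since your option (b) (conjugation by $\bm\Ad_g$) runs into a sign mismatch ($\bm\Ad_g\bm\Psi_{-\eta}\bm\Ad_g^{-1}=\bm\Psi_{-\eta'}$, not $\bm\Psi_{\eta'}$) and an extra term from differentiating $\bm\Ad_g$ in the Jacobian.
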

\begin{proof}
See Appendix \ref{app:proof-crb-group}.
\end{proof}
\noindent
The bias and variance of $\hat g$ may not be small; since $f(g) = f(gh)$, the observed measurement does not help us distinguish between $g$ and $gh$ (that is, unless $H=\lbrace e \rbrace$ is the trivial subgroup). 

\subsection{The CRB on Homogeneous Spaces}
\label{sec:CRB-GH}
Any estimator $\hat g$ on $G$ \textit{induces} an estimator $\hat \theta$ on $G/H$: 
$$\hat \theta(x) \coloneqq \pi \circ \hat g\,(x) = \hat g(x) H.$$
 For example, if our goal is to estimate a point on the sphere $S^2$ based on the observation $x\in\mathcal X$, we may instead estimate a rotation matrix $\hat{\mf R} (x)\in SO(3)$, so that $\pi\big(\hat {\mf R}(x)\big)$ simply extracts the first column of $\hat {\mf R}(x)$, i.e., a point on the sphere. The covariance of $\hat {\mf R}$ satisfies the CRB given in Theorem \ref{thm:crb-group}. We now investigate the converse direction; suppose we are given an estimator $\hat \theta$ on $G/H$. Does this correspond to a unique estimator on $G$? How can we state the CRB for $\hat \theta$?

If the estimator $\hat \theta$ is sufficiently concentrated near some parameter $gH$, 
then we can convert it into a unique estimator $\hat g$ near $g$ (where $g$ is a given representative of $gH$) using the following procedure. 
Consider the submanifold $$\mathcal S_g \coloneq \lbrace g \exp Y | Y\in \mf m, \lVert Y \rVert \leq \epsilon \rbrace \subseteq G$$
where $\lVert \,\cdot\,\rVert $ is an arbitrary norm on $\mf m$.
For sufficiently small $\epsilon$, $\mathcal S_g$ is a \textit{horizontal cross-section} of $G$ which is diffeomorphic to (i.e., looks like) the subset $\pi(\mathcal S_g)$ of $G/H$. Consequently, there is a local right-inverse of $\pi$, denoted as $\sigma_g : \pi(\mathcal S_g) \rightarrow \mathcal S_g$, such that $\pi \circ \sigma_g$ is the identity map on $\mathcal S_g$.
Assuming that the range of $\hat \theta$ is contained in $\pi(\mathcal S_g)$, we can define the estimator $\hat g \coloneq \sigma(\hat \theta): \mathcal X \rightarrow \mathcal S_g$, which we call as the \textit{horizontal lift} of $\hat \theta$ at $g$. In this sense, there is a one-to-one correspondence between estimators on $G/H$ and estimators on $\mathcal S_g$ for each $g\in G$. 

When the CRB on Lie groups is applied to a horizontally lifted estimator on $G$, we obtain the CRB on $G/H$.
\begin{theorem}[CRB on Homogeneous Spaces]
    Let $\hat \theta$ be an estimator on $G/H$ and $\hat g$ its horizontal lift at $g$. Then, the left-invariant error, bias, and covariance of $\hat g$ are of the form
    \begin{align*}
        \bm \upeta(g) &= \begin{bmatrix}
        \bm 0\\
        \bar{\bm \upeta}(g)
        \end{bmatrix}, \  \mf b(g) = \begin{bmatrix}
        \mf 0\\
        \bar{\mf b}(g)
        \end{bmatrix}, \;\textit{and }\  \mf \Sigma(g) = \begin{bmatrix}
        \mf 0 & \mf 0\\
        \mf 0 & \overline{\mf \Sigma}(g)
        \end{bmatrix},
    \end{align*}
    respectively. 
    Furthermore, the following inequality holds:
    \begin{align}
        \left(\bm \Pi \bm \Phi(g) \bm \Pi^\top\right)\,\overline {\mf F}_g^{-1} \left(\bm \Pi \bm \Phi(g) \bm \Pi^\top\right)^\top
    \;\preccurlyeq\; \overline {\mf\Sigma}(g)
    \label{eq:crb-GH}
    \end{align}
    where $\bm \Pi \coloneq \begin{bmatrix}\, \bm 0 & \mf I \,\,\end{bmatrix}$.\footnote{Here, the dimensions of $\mf 0$ and $\mf I$ are inferred from the context: $\mf 0$ is the ${n_\Theta \times n_H}$ matrix of zeros and $\mf I$ is the ${n_\Theta \times n_\Theta}$ identity matrix.}
    \label{thm:crb-group-homogeneous}
\end{theorem}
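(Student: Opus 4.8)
The plan is to deduce both assertions from the Lie-group CRB (Theorem~\ref{thm:crb-group}) applied to the horizontal lift $\hat g = \sigma_g\circ\hat\theta$, combined with the block structure of the left {\bf FIM} recorded in Lemma~\ref{lem:fim-properties}.

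First I would pin down the block form of the error, bias, and covariance. By construction the horizontal lift takes values in $\mathcal S_g$, so for every $x$ there is a unique $Y(x)\in\mf m$ with $\lVert Y(x)\rVert\le\epsilon$ and $\hat g(x)=g\exp Y(x)$; hence $g^{-1}\hat g(x)=\exp Y(x)$, and (shrinking $\epsilon$ so that $\log$ inverts $\exp$ on this ball) the left-invariant error is $\eta(g)(x)=Y(x)\in\mf m$. Since $\mf m=\operatorname{span}\{E_{n_H+1},\dots,E_{n_G}\}$, the coordinate vector $\bm\upeta(g)=\eta(g)^\vee$ has vanishing first $n_H$ entries; writing $\bar{\bm\upeta}(g)\in\mathbb R^{n_\Theta}$ for the remaining block gives $\bm\upeta(g)=[\,\bm 0^\top\ \bar{\bm\upeta}(g)^\top\,]^\top$. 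Taking expectations yields $\mf b(g)=[\,\bm 0^\top\ \bar{\mf b}(g)^\top\,]^\top$, and since $\bm\upeta(g)-\mf b(g)$ is again supported on the last $n_\Theta$ coordinates, (\ref{eq:covariance}) gives $\mf\Sigma(g)=\operatorname{diag}(\mf 0,\overline{\mf\Sigma}(g))$ with $\overline{\mf\Sigma}(g)=\mathbb E\big[(\bar{\bm\upeta}(g)-\bar{\mf b}(g))(\bar{\bm\upeta}(g)-\bar{\mf b}(g))^\top\big]$. It is worth noting in passing that $\bar{\bm\upeta}(g),\bar{\mf b}(g),\overline{\mf\Sigma}(g)$ are exactly the error, bias, and covariance of $\hat\theta$ read off in the chart $Y\mapsto g\exp(Y)H$ and the basis (\ref{eq:LIVF_basis_on_GH}).

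Next I would derive the inequality. By Lemma~\ref{lem:fim-properties}(1), $\mf F_g^L=\operatorname{diag}(\mf 0,\overline{\mf F}_g)$; since $\overline{\mf F}_g$ is assumed non-singular, its Moore--Penrose pseudoinverse is ${(\mf F_g^L)}^\dagger=\bm\Pi^\top\overline{\mf F}_g^{-1}\bm\Pi$. Substituting this into the left CRB (\ref{eq:left-crb}) for $\hat g$ and then conjugating both sides by $\bm\Pi$ — which preserves the Loewner order, i.e.\ $\mf A\preccurlyeq\mf B$ implies $\bm\Pi\mf A\bm\Pi^\top\preccurlyeq\bm\Pi\mf B\bm\Pi^\top$ — one gets, using $\bm\Pi\mf\Sigma(g)\bm\Pi^\top=\overline{\mf\Sigma}(g)$ from the block form just established and $\bm\Pi\bm\Phi(g)^\top\bm\Pi^\top=(\bm\Pi\bm\Phi(g)\bm\Pi^\top)^\top$, that the left-hand side collapses to $(\bm\Pi\bm\Phi(g)\bm\Pi^\top)\,\overline{\mf F}_g^{-1}\,(\bm\Pi\bm\Phi(g)\bm\Pi^\top)^\top$, which is precisely (\ref{eq:crb-GH}).

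I do not expect a deep obstacle here; the delicate points are of a bookkeeping nature. The first is verifying that the horizontal lift genuinely forces $\eta(g)\in\mf m$, which hinges on the definition of $\sigma_g$ and on $\epsilon$ being small enough for $\log$ to be single-valued on $\exp(\mf m\cap B_\epsilon)$. The second is tracking that the projector $\bm\Pi$ selects the $\mf m$-block consistently with the decomposition in Lemma~\ref{lem:fim-properties}, and being careful that \emph{no} special structure is claimed for $\bm\Phi(g)$ itself — the Jacobian $\mf J_{\mf b}(g)$ may have nonzero $\mathfrak h$-components, since $\hat g(x)\in\mathcal S_g$ only at the representative $g$ and not at nearby group elements, so $\bm\Phi(g)$ enters the final bound solely through $\bm\Pi\bm\Phi(g)\bm\Pi^\top$. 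Using the pseudoinverse rather than the inverse of $\mf F_g^L$ is exactly what makes the cross terms with the degenerate $\mathfrak h$-directions drop out.
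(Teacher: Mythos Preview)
Your proposal is correct and follows essentially the same route as the paper: you use that the horizontal lift forces $\eta(g)\in\mf m$ to obtain the block forms, invoke Lemma~\ref{lem:fim-properties}(1) to write ${(\mf F_g^L)}^\dagger=\bm\Pi^\top\overline{\mf F}_g^{-1}\bm\Pi$, and then conjugate the left CRB (\ref{eq:left-crb}) by $\bm\Pi$ to extract (\ref{eq:crb-GH}). The paper's proof is the same argument, only more tersely stated; your additional remarks about the well-definedness of $\log$ and the fact that only $\bm\Pi\bm\Phi(g)\bm\Pi^\top$ enters the bound are accurate and do not deviate from the paper's approach.
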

\begin{proof}
    Observe that $\hat g$ is of the form $g \exp \hat Y$, where $\hat Y:\mathcal X \rightarrow \mf m$ is an $\mf m$-valued estimator. The left-invariant error becomes
    $$
    \eta(g) = \log(g^{-1} \hat g) = \log\big(\exp(\hat Y)\big) = \hat Y.
    $$
    Hence, $\eta(g) \in \mf m$.
    Due to our choice of basis (cf. Section \ref{sec:GH-structure}), the first $n_H$ components of $\bm \upeta(g)=\eta(g)^\vee$ are zero, which establishes the first part of the theorem.

    To show the CRB inequality, we note that the pseudoinverse of $\mf F_g^L$ is given by
    \begin{align}
        {\mf F^L_g}^\dagger = \begin{bmatrix}
        \mf 0 & \mf 0 \\
        \mf 0 & \overline {\mf F}(g)^{-1}
        \end{bmatrix} = \mf \Pi^\top \overline {\mf F}(g)^{-1} \mf \Pi.
    \end{align}
    Multiplying the left CRB (\ref{eq:left-crb}) by $\mf \Pi(\,\cdot\,)\mf \Pi^\top$ on either side gives us the desired inequality, while observing that
        $
        \mf \Pi \mf \Sigma(g) \mf \Pi^\top
        =\overline {\mf\Sigma}(g).
        $
\end{proof}
\noindent
    In Appendix \ref{app:relationship}, we show that $\bar{\bm \upeta}(g)$ is an appropriate definition for the estimation error of $\hat \theta$ since it agrees with the intrinsic (Riemannian) definition under certain conditions. Furthermore, it is easy to see that $\bar{\bm \upeta}(g) = \mf 0$ if and only if $\hat \theta = gH$. 
%
\noindent
\begin{corollary}[Third-Order CRB for Unbiased Estimators]
    Suppose $\hat \theta$ is unbiased,
    i.e., $\bar{\mf b}(g) = \mf 0$. Then, 
    neglecting the fourth-order terms in (\ref{eq:crb-GH}),
    we have
    \begin{align}
        \big(\mf I +  \mf \Delta(g)\big) \overline {\mf F}_g^{-1} \big(\mf I + \mf \Delta(g)\big)^\top
        \preccurlyeq \overline {\mf\Sigma}(g),
        \label{eq:crb-GH-unbiased}
    \end{align}
    where $\mf \Delta(g) \coloneq \mf \Pi\; \mathbb E\left[\frac{1}{12}\bm \ad_{\eta(g)}^2 \right] \mf \Pi^\top$ and $\bm \ad_{(\cdot)}$ denotes the matrix representation of $\ad_{(\cdot)}$.
    \label{cor:crb-GH-unbiased}
\end{corollary}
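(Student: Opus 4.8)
The plan is to start from inequality~(\ref{eq:crb-GH}) of Theorem~\ref{thm:crb-group-homogeneous} and expand the congruence factor $\bm\Pi\bm\Phi(g)\bm\Pi^\top$ to second order in the error $\eta(g)$, showing that unbiasedness collapses it to $\mf I+\mf\Delta(g)$ up to a fourth-order remainder. Recalling $\bm\Phi(g)=\mathbb E[\boldsymbol\Psi_{-\eta(g)}]+\mf J_{\mf b}(g)$, the first step is to eliminate the bias-Jacobian term: by the first part of Theorem~\ref{thm:crb-group-homogeneous} the bias of the horizontal lift has the block form $\mf b(g)=[\,\mf 0^\top\ \bar{\mf b}(g)^\top\,]^\top$, so unbiasedness (i.e.\ $\bar{\mf b}\equiv\mf 0$) forces $\mf b\equiv\mf 0$, hence $\mf J_{\mf b}(g)=\mf 0$ and in particular $\bm\Pi\mf J_{\mf b}(g)\bm\Pi^\top=\mf 0$. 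This leaves $\bm\Pi\bm\Phi(g)\bm\Pi^\top=\bm\Pi\,\mathbb E[\boldsymbol\Psi_{-\eta(g)}]\,\bm\Pi^\top$.

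Next I would Taylor-expand $\boldsymbol\Psi_{-\eta(g)}$ about $\eta(g)=0$. Lemma~\ref{lem:derivative-log} writes $\mf\Psi_{(\cdot)}$ as a power series in $\bm\ad_{(\cdot)}$; the only facts I need are that its constant term is $\mf I$, its quadratic term is $\tfrac1{12}\bm\ad_{(\cdot)}^2$, and its cubic term vanishes (the standard coefficients of the Bernoulli-type generating function), so $\boldsymbol\Psi_{-X}=\mf I\mp\tfrac12\bm\ad_X+\tfrac1{12}\bm\ad_X^2+O(\lVert X\rVert^4)$, the sign of the linear term being immaterial. Taking the expectation and using linearity of $\bm\ad_{(\cdot)}$, the linear contribution is $\mp\tfrac12\mathbb E[\bm\ad_{\eta(g)}]=\mp\tfrac12\bm\ad_{B(g)}$, which vanishes since $B(g)^\vee=\mf b(g)=\mf 0$. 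Hence $\mathbb E[\boldsymbol\Psi_{-\eta(g)}]=\mf I+\tfrac1{12}\mathbb E[\bm\ad_{\eta(g)}^2]+O(\text{4th order})$, and forming $\bm\Pi(\,\cdot\,)\bm\Pi^\top$ gives $\bm\Pi\,\mathbb E[\boldsymbol\Psi_{-\eta(g)}]\,\bm\Pi^\top=\bm\Pi\bm\Pi^\top+\mf\Delta(g)+O(\text{4th order})=\mf I+\mf\Delta(g)+O(\text{4th order})$, using $\bm\Pi\bm\Pi^\top=\mf I$ and the definition $\mf\Delta(g)=\tfrac1{12}\bm\Pi\,\mathbb E[\bm\ad_{\eta(g)}^2]\,\bm\Pi^\top$. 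Substituting into~(\ref{eq:crb-GH}) and discarding the fourth-order remainder yields~(\ref{eq:crb-GH-unbiased}).

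The delicate point --- more bookkeeping than mathematics --- is the meaning of ``neglecting the fourth-order terms.'' Since $\mf\Delta(g)$ is itself quadratic in $\eta(g)$, the product $(\mf I+\mf\Delta(g))\,\overline{\mf F}_g^{-1}\,(\mf I+\mf\Delta(g))^\top$ already carries a genuine fourth-order piece $\mf\Delta(g)\,\overline{\mf F}_g^{-1}\,\mf\Delta(g)^\top$; the statement must therefore be read as truncating the congruence factor $\bm\Pi\bm\Phi(g)\bm\Pi^\top$ at second order and then forming the congruence, not as a uniformly third-order expansion of the whole right-hand side, so that~(\ref{eq:crb-GH-unbiased}) is an approximate rather than an exact bound (consistent with the label ``third-order CRB''). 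A secondary point worth one sentence is regularity: the matrix logarithm defining $\eta(g)$ is well-defined on the concentrated neighborhood on which the horizontal lift $\hat g$ is built (already assumed in Section~\ref{sec:CRB-GH}), and the remainder of the $\boldsymbol\Psi$-expansion, after taking the expectation, is governed by the fourth moments of $\eta(g)$ --- precisely the regime in which the bound is meaningful. Everything else is linear algebra already recorded: the block forms of $\mf b(g)$ and $\bm\upeta(g)$ from Theorem~\ref{thm:crb-group-homogeneous}, the identity $\bm\Pi\bm\Pi^\top=\mf I$, and linearity of $\bm\ad_{(\cdot)}$ and of $\mathbb E$.
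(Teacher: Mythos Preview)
Your proposal is correct and follows essentially the same approach as the paper's proof: expand $\boldsymbol\Psi_{-\eta(g)}$ via Lemma~\ref{lem:derivative-log} to second order, kill the Jacobian term and the linear $\bm\ad$-term using unbiasedness (and linearity of $\ad$), and substitute the resulting $\mf I+\tfrac{1}{12}\mathbb E[\bm\ad_{\eta(g)}^2]$ into~(\ref{eq:crb-GH}). Your treatment is in fact slightly more careful than the paper's, since you spell out $\bm\Pi\bm\Pi^\top=\mf I$ and flag what ``neglecting fourth-order terms'' means for the congruence.
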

\begin{proof}
Using the definitions in Lemma \ref{lem:derivative-log} and Remark \ref{rem:bernoulli} of Appendix \ref{app:proof-crb-group}, we get 
\begin{align}
\boldsymbol\Psi_{-\eta(g)} = \mf I - \frac{1}{2}\bm \ad_{\eta(g)} + \frac{1}{12}\bm \ad_{\eta(g)}^2 + O(\lVert \bm\upeta(g)\rVert^4),
\end{align}
whereas $\mf J_{\mf b}(g) = \mf 0$ due to unbiasedness. Hence, we have
\begin{align}
    \bm \Phi(g) = \mathbb E \left[\boldsymbol\Psi_{-\eta(g)} \right] &\approx  \mf I + \mathbb E\left[-\frac{1}{2}\bm \ad_{\eta(g)} + \frac{1}{12}\bm \ad_{\eta(g)}^2 \right]\\
    &= \mf I +  \mathbb E\left[\frac{1}{12}\bm \ad_{\eta(g)}^2 \right],
    \label{eq:phi-approx}
\end{align}
since $\ad_{(\cdot)}$ is linear in $\eta (g)$ and $\mathbb E[\eta(g)] = 0$ for the horizontal lift of an unbiased estimator. Substituting into (\ref{eq:crb-GH}) gives us (\ref{eq:crb-GH-unbiased}).
\end{proof}

An estimator is said to be \textit{efficient} if it achieves the CRB, i.e., if the inequality in (\ref{eq:crb-GH}) is an equality. While an efficient estimator does not necessarily exist, a sufficient and necessary condition for efficiency is given in the next corollary. It is obtained using the fact that the Cauchy-Schwarz inequality used in the proof of Theorem \ref{thm:crb-group} becomes an \textit{equality} when the corresponding vectors are linearly dependent (c.f. \cite[Cor. 3]{smith2005covariance}).

\begin{corollary}[Efficiency]
    The estimator $\hat \theta$ is \textit{efficient} (i.e., achieves the lower bound given in Theorem \ref{thm:crb-group-homogeneous}) if and only if its horizontal lift $\hat g$ satisfies
    \begin{align}
        &  {\bm \upeta}(g) = {\mf b}(g)
   + c \cdot  \bm \Phi(g) \bm \Pi ^\top \overline {\mf F}_g^{-1} \,\overline{\mf{grad}}\,\ell(g),
   \label{eq:efficient-estimator}
    \end{align}
    where $c \in \mathbb R$ is a scalar and $$\overline{\mf{grad}}\, \ell(g) \coloneq \left[E_{n_H+1}^L \ell(g)\ \cdots E_{n_G}^L \ell(g)\right]^\top$$ represents the gradient of $\ell$ w.r.t. the LIVFs generated by $\mf m$.
    \label{cor:efficiency}
\end{corollary}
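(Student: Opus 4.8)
The plan is to identify ``$\hat\theta$ is efficient'' with the equality case of the Cauchy--Schwarz step that produces the bound in Theorem~\ref{thm:crb-group}, and then rewrite that equality case as a condition on the horizontal lift $\hat g$. First I would assemble the $G$-level ingredients. Writing $\mf s^L(g)$ for the left score vector at $g$ (the vector whose $i$-th component is $E_i^L\ell(g)$), one has $\mf F_g^L=\mathbb E\big[\mf s^L(g)\,\mf s^L(g)^\top\big]$; by the computation in the proof of Lemma~\ref{lem:fim-properties} its first $n_H$ components vanish, so $\mf s^L(g)=\bm\Pi^\top\,\overline{\mf{grad}}\,\ell(g)$, and from the proof of Theorem~\ref{thm:crb-group-homogeneous} we have ${\mf F_g^L}^\dagger=\bm\Pi^\top\overline{\mf F}_g^{-1}\bm\Pi$. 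Recalling from Appendix~\ref{app:proof-crb-group} that $\bm\Phi(g)$ is the cross-correlation of the error with the score, $\bm\Phi(g)=\mathbb E\big[\bm\upeta(g)\,\mf s^L(g)^\top\big]=\mathbb E\big[(\bm\upeta(g)-\mf b(g))\,\mf s^L(g)^\top\big]$ (equal since $\mathbb E[\mf s^L(g)]=\mf 0$), while $\mf\Sigma(g)=\mathbb E\big[(\bm\upeta(g)-\mf b(g))(\bm\upeta(g)-\mf b(g))^\top\big]$: because $\hat g$ is a horizontal lift we have $\bm\upeta(g)\in\mf m$, so $\bm\Phi(g)$ has vanishing first $n_H$ rows and columns and $\bm\Phi(g){\mf F_g^L}^\dagger\bm\Phi(g)^\top$ vanishes outside its last $n_\Theta\times n_\Theta$ block, just as $\mf\Sigma(g)$ does. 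Since the proof of Theorem~\ref{thm:crb-group-homogeneous} obtains \eqref{eq:crb-GH} by compressing \eqref{eq:left-crb} with $\bm\Pi(\cdot)\bm\Pi^\top$ and this compression discards only the (zero) first block of each side, equality in \eqref{eq:crb-GH} is equivalent to equality in \eqref{eq:left-crb} for $\hat g$; so it suffices to characterize the latter.

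Next I would run the ``best linear estimator'' form of the Cauchy--Schwarz step used in Appendix~\ref{app:proof-crb-group}: set $\mf w\coloneq\big(\bm\upeta(g)-\mf b(g)\big)-\bm\Phi(g){\mf F_g^L}^\dagger\mf s^L(g)$ and check $\mathbb E\big[\mf w\,\mf s^L(g)^\top\big]=\mf 0$, which reduces to $\bm\Phi(g){\mf F_g^L}^\dagger\mf F_g^L=\bm\Phi(g)$ and holds because $\bm\Phi(g)$ has vanishing first $n_H$ columns while ${\mf F_g^L}^\dagger\mf F_g^L=\bm\Pi^\top\bm\Pi$. Then $\mf\Sigma(g)=\mathbb E[\mf w\mf w^\top]+\bm\Phi(g){\mf F_g^L}^\dagger\bm\Phi(g)^\top$ (using ${\mf F_g^L}^\dagger\mf F_g^L{\mf F_g^L}^\dagger={\mf F_g^L}^\dagger$), so the bound \eqref{eq:left-crb} is attained iff $\mathbb E[\mf w\mf w^\top]=\mf 0$, i.e.\ iff $\mf w=\mf 0$ almost surely. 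Substituting $\mf s^L(g)=\bm\Pi^\top\overline{\mf{grad}}\,\ell(g)$, ${\mf F_g^L}^\dagger=\bm\Pi^\top\overline{\mf F}_g^{-1}\bm\Pi$, and $\bm\Pi\bm\Pi^\top=\mf I$ then turns $\mf w=\mf 0$ into $\bm\upeta(g)=\mf b(g)+\bm\Phi(g)\bm\Pi^\top\overline{\mf F}_g^{-1}\,\overline{\mf{grad}}\,\ell(g)$, which is \eqref{eq:efficient-estimator} with $c=1$; this yields both necessity and sufficiency.

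Finally I would address the scalar $c$: the ``for some $c\in\mathbb R$'' in the statement is just the linear-dependence form of the Cauchy--Schwarz equality condition (cf.\ \cite[Cor.~3]{smith2005covariance}), and it is consistent only with $c=1$ whenever $\bm\Phi(g)\neq\mf 0$ --- substituting $\bm\upeta(g)-\mf b(g)=c\,\bm\Phi(g){\mf F_g^L}^\dagger\mf s^L(g)$ back into $\bm\Phi(g)=\mathbb E[(\bm\upeta(g)-\mf b(g))\mf s^L(g)^\top]$ gives $\bm\Phi(g)=c\,\bm\Phi(g)$ --- whereas if $\bm\Phi(g)=\mf 0$ the bound in \eqref{eq:crb-GH} is vacuous and \eqref{eq:efficient-estimator} collapses to $\bm\upeta(g)=\mf b(g)$, consistent with $\overline{\mf\Sigma}(g)=\mf 0$. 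The part I expect to be fiddly is not a new idea but the bookkeeping: cleanly justifying the orthogonality $\mathbb E[\mf w\,\mf s^L(g)^\top]=\mf 0$ with a singular $\mf F_g^L$, and checking that the passage between \eqref{eq:left-crb} on $G$ and \eqref{eq:crb-GH} on $G/H$ preserves the equality case. The probabilistic content is exactly the Euclidean efficiency argument, transported through the identifications $\mf s^L(g)=\bm\Pi^\top\overline{\mf{grad}}\,\ell(g)$ and ${\mf F_g^L}^\dagger=\bm\Pi^\top\overline{\mf F}_g^{-1}\bm\Pi$.
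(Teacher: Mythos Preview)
Your proposal is correct and follows the same underlying idea as the paper --- namely, that efficiency corresponds to the equality case of the Cauchy--Schwarz step in the proof of Theorem~\ref{thm:crb-group}. The paper's own argument is a one-line sketch: it simply states that Cauchy--Schwarz becomes an equality when the two $L^2$ functions $\mf y^\top(\bm\upeta(g)-\mf b(g))$ and $\mf x^\top\mf s^L(g)$ (with $\mf x={\mf F_g^L}^\dagger\bm\Phi(g)^\top\mf y$) are linearly dependent, citing \cite[Cor.~3]{smith2005covariance}.

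Your organization is slightly different and, in fact, more explicit. Rather than invoking the scalar Cauchy--Schwarz equality case for each $\mf y$ and then assembling a vector condition, you use the orthogonal (Pythagorean) decomposition $\mf\Sigma(g)=\mathbb E[\mf w\mf w^\top]+\bm\Phi(g){\mf F_g^L}^\dagger\bm\Phi(g)^\top$, which directly yields $c=1$ and avoids having to argue separately that the proportionality constant is the same for all $\mf y$. This is the standard ``best linear predictor'' route to the CRB and is arguably cleaner here because $\mf F_g^L$ is singular: you verify $\bm\Phi(g){\mf F_g^L}^\dagger\mf F_g^L=\bm\Phi(g)$ explicitly via the block structure, whereas the paper's sketch leaves this to the reader. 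Your additional observation that substituting \eqref{eq:efficient-estimator} back into the definition of $\bm\Phi(g)$ forces $c=1$ (when $\bm\Phi(g)\neq\mf 0$) is a nice clarification that the paper's statement leaves implicit. Your reduction from equality in \eqref{eq:crb-GH} to equality in \eqref{eq:left-crb} --- via the block structure of both sides for a horizontal lift --- is also a detail the paper does not spell out.
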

Using (\ref{eq:phi-approx}) and (\ref{eq:efficient-estimator}), we see that an unbiased efficient estimator satisfies $
    {\bm \upeta}(g) \propto  \bm \Pi ^\top \overline {\mf F}_g^{-1} \,\overline{\mf{grad}}\,\ell(g) $
up to an $O(\bm \Delta(g))$ error. Note that either side of this equation is treated as a deterministic vector-valued function on $\mathcal X$, since $\hat g(\,\cdot\,)$ and $\ell(\,\cdot\,|\,g)$ are each functions of the observation $x\in \mathcal X$. 


\begin{remark}
Theorem \ref{thm:crb-group-homogeneous} and its corollaries are also applicable to group-valued estimators. To do this, we view a group $G$ as the homogeneous space $G/\lbrace e \rbrace$ while making the substitutions $\bm \Pi \mapsto \mf I$ and $n_H \mapsto 0$.
In this sense, Corollary \ref{cor:crb-GH-unbiased} is consistent with the result for Lie groups derived in \cite{bonnabel2015intrinsic}.
\end{remark}

\subsection{Bound on the Variance}\label{sec:variance}
While a bound on the variance is readily obtained by taking the trace of (\ref{eq:crb-GH}), the bound is particularly interesting for a special class of homogeneous spaces, namely those that admit a \textit{$G$-invariant metric}. 
Roughly speaking, if $G/H$ is endowed with a $G$-invariant metric, then it retains its shape when operated upon by $G$; for e.g., the unit sphere in $\mathbb R^3$ looks identical when it is rotated by an element of $SO(3)$.
Unlike left-invariant metrics on $G$ (which always exist), $G/H$ admits
a $G$-invariant metric if and only if there exists an {$\Ad_H$-invariant inner product} on $\mf m$, i.e., an inner product satisfying
\begin{align}
    \langle Z_1, Z_2 \rangle = \langle \Ad_h Z_1, \Ad_h Z_2 \rangle
    \label{eq:Ad-invariance}
\end{align}
for all $Z_1, Z_2 \in \mf m$ and $h\in H$. A sufficient condition for such an inner product to exist is for $H$ to be compact \cite[Prop. 3.16]{cheeger1975comparison} \cite[Prop. 23.22]{gallier2020differential}.

In the forthcoming lemma, we will suppose that the basis $\lbrace E_i \rbrace_{i=1}^{n_G}$ introduced in Section \ref{sec:GH-structure} is orthonormal w.r.t. the choice of inner product on $\mathfrak g$. 
Note that we have assumed implicitly that $\mathfrak h$ and $\mf m$ are mutually orthogonal. 
Moreover, due to our assumption that $\Ad_H(\mf m) \subseteq \mf m$ in Section \ref{sec:GH-structure}, the subspaces $\mathfrak h$ and $\mf m$ are each invariant under $\Ad_h$ for all $h\in H$. Thus, $\bm \Ad_h$ takes the form
\begin{align}
    \bm \Ad_h = \begin{bmatrix}
    \bm \Ad_h\rvert_{\mathfrak h} & \bm 0\\
    \bm 0 & \bm \Ad_h\rvert_{\mf m}
    \end{bmatrix}
\end{align}
and condition (\ref{eq:Ad-invariance}) is equivalent to $\bm \Ad_h\rvert_{\mf m}$ being orthogonal.
\begin{lemma}
    Suppose the inner product on $\mathfrak g$ is $\Ad_H$-invariant when restricted to $\mf m$ (or equivalently, $\bm \Ad_h\rvert_{\mf m}$ is orthogonal). Then, for all $h\in H$, we have
    \begin{align*}\tr \left(\overline {\mf \Sigma}(g) \right) = \tr \left( \overline {\mf \Sigma}(gh) \right)
    \ \textit{ and }\   \tr \big( \overline {\mf F}_g^{-1} \big) = \tr \big( \overline {\mf F}_{gh}^{-1} \big).\end{align*}
    \label{lem:variance}
\end{lemma}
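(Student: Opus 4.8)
The plan is to exploit the transformation laws already established for $\overline{\mf F}_g$ and $\overline{\mf\Sigma}(g)$ under right translation by $h\in H$, and show that the hypothesis that $\bm\Ad_h\rvert_{\mf m}$ is orthogonal makes the trace invariant. The starting point is equation (\ref{eq:fimL-gh}), $\mf F_{gh}^L = \bm\Ad_h^\top \mf F_g^L \bm\Ad_h$, combined with the block form of $\mf F_g^L$ from Lemma \ref{lem:fim-properties} and the block-diagonal form of $\bm\Ad_h$ displayed just before the lemma statement. Multiplying out the blocks, the lower-right $n_\Theta\times n_\Theta$ block of $\mf F_{gh}^L$ is exactly $\left(\bm\Ad_h\rvert_{\mf m}\right)^\top \overline{\mf F}_g \left(\bm\Ad_h\rvert_{\mf m}\right)$, i.e.
\begin{align}
\overline{\mf F}_{gh} = \left(\bm\Ad_h\rvert_{\mf m}\right)^\top \overline{\mf F}_g \left(\bm\Ad_h\rvert_{\mf m}\right).
\label{eq:Fbar-transform}
\end{align}

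Next I would derive the analogous law for $\overline{\mf\Sigma}$. By Theorem \ref{thm:crb-group-homogeneous} the horizontal lift $\hat g$ at $g$ has left-invariant error $\eta(g)=\hat Y\in\mf m$. The horizontal lift of the \emph{same} estimator $\hat\theta$ at the representative $gh$ is $\hat g_{new}$ with $\hat g_{new} = (gh)\exp(\hat Y')$ for some $\mf m$-valued $\hat Y'$, and since $\hat g$ and $\hat g_{new}$ both project to $\hat\theta$, the two lifts differ by an element of $H$; chasing the coset identity $g\exp(\hat Y)\,H = gh\exp(\hat Y')\,H$ together with $\Ad_h(\mf m)\subseteq\mf m$ gives, to the order at which the CRB is stated, $\hat Y' = \Ad_{h^{-1}}\hat Y$ (this is the standard fact that the $\mf m$-component of the error transforms by $\Ad_{h^{-1}}$ when one changes the representative, exactly mirroring (\ref{eq:fimL-gh})). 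Consequently $\bar{\bm\upeta}(gh) = \left(\bm\Ad_h\rvert_{\mf m}\right)^{-1}\bar{\bm\upeta}(g)$, hence $\bar{\mf b}(gh) = \left(\bm\Ad_h\rvert_{\mf m}\right)^{-1}\bar{\mf b}(g)$, and therefore
\begin{align}
\overline{\mf\Sigma}(gh) = \left(\bm\Ad_h\rvert_{\mf m}\right)^{-1} \overline{\mf\Sigma}(g) \left(\bm\Ad_h\rvert_{\mf m}\right)^{-\top}.
\label{eq:Sigmabar-transform}
\end{align}
Now the hypothesis kicks in: $\bm\Ad_h\rvert_{\mf m}$ orthogonal means $\left(\bm\Ad_h\rvert_{\mf m}\right)^{-1} = \left(\bm\Ad_h\rvert_{\mf m}\right)^\top$ and $\left(\bm\Ad_h\rvert_{\mf m}\right)^{-\top} = \bm\Ad_h\rvert_{\mf m}$, so both (\ref{eq:Fbar-transform}) and (\ref{eq:Sigmabar-transform}) are genuine orthogonal conjugations. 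Also $\overline{\mf F}_{gh}^{-1} = \left(\bm\Ad_h\rvert_{\mf m}\right)^\top \overline{\mf F}_g^{-1} \left(\bm\Ad_h\rvert_{\mf m}\right)$. Applying $\tr$ and using its cyclic invariance together with $\left(\bm\Ad_h\rvert_{\mf m}\right)\left(\bm\Ad_h\rvert_{\mf m}\right)^\top = \mf I$ immediately yields $\tr\big(\overline{\mf\Sigma}(gh)\big) = \tr\big(\overline{\mf\Sigma}(g)\big)$ and $\tr\big(\overline{\mf F}_{gh}^{-1}\big) = \tr\big(\overline{\mf F}_g^{-1}\big)$.

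The step I expect to be the main obstacle is the derivation of (\ref{eq:Sigmabar-transform}) — specifically, pinning down precisely how the error vector $\bar{\bm\upeta}$ of the horizontal lift changes when the representative is replaced by $gh$. One must be careful that the horizontal cross-section $\mathcal S_{gh}$ is not simply $R_h$ applied to $\mathcal S_g$ (since $\exp(\mf m)$ is not a subgroup), so the identity $\hat Y' = \Ad_{h^{-1}}\hat Y$ holds only up to higher-order terms in $\lVert\hat Y\rVert$; however, this is exactly the order already discarded in the CRB of Theorem \ref{thm:crb-group-homogeneous} and Corollary \ref{cor:crb-GH-unbiased}, so it is harmless here. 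An alternative, cleaner route that sidesteps the lift entirely is to observe that $\overline{\mf\Sigma}(gh)$ and $\overline{\mf\Sigma}(g)$ are the lower-right blocks of $\mf\Sigma(gh)$ and $\mf\Sigma(g)$, and that the full covariance transforms by $\mf\Sigma(gh) = \bm\Ad_h^\top \mf\Sigma(g)\bm\Ad_h$ by the same computation that gave (\ref{eq:fimL-gh}); extracting the $\mf m$-block via $\bm\Pi$ and using the block-diagonal form of $\bm\Ad_h$ then gives (\ref{eq:Sigmabar-transform}) directly. I would present this second route in the write-up, as it reuses machinery already in the paper and keeps the order-of-approximation bookkeeping implicit.
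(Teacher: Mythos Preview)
Your approach is essentially the paper's: for $\overline{\mf F}$ it uses exactly your block computation from (\ref{eq:fimL-gh}), and for $\overline{\mf\Sigma}$ it identifies the horizontal lift at $gh$ as $gh\exp(\Ad_{h^{-1}}\hat Y)$ and then compares norms. One correction to your reasoning: the identity $\hat Y' = \Ad_{h^{-1}}\hat Y$ is \emph{exact}, not approximate, because $gh\exp(\Ad_{h^{-1}}\hat Y) = g\exp(\hat Y)h$ is a group identity (conjugation commutes with $\exp$), and the right-hand side projects to $\hat\theta$ exactly; so your caveat about ``to the order at which the CRB is stated'' and the worry that $\mathcal S_{gh}\neq R_h(\mathcal S_g)$ are unnecessary. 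Relatedly, your proposed ``alternative route'' is not really independent: the claimed law $\mf\Sigma(gh)=\bm\Ad_h^\top\mf\Sigma(g)\bm\Ad_h$ does not follow from the same computation as (\ref{eq:fimL-gh}) (that equation relied on the right {\bf FIM} being fiber-constant, which has no ready analogue for the covariance), and establishing it still requires relating the two horizontal lifts --- so stick with the first route.
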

\begin{proof}
    Let $\hat \theta$ be an estimator on $G/H$ and $ \hat g \coloneq g \exp \hat Y$ its horizontal lift at $g$. The horizontal lift of $\hat \theta$ at $gh$ is then $\hat g_0 \coloneq gh \exp (\Ad_{h^{-1}}\hat Y)$. To see this, we recall from Section \ref{sec:GH-structure} that $\hat Y \in \mf m \Rightarrow \Ad_{h^{-1}}\hat Y \in \mf m$ so that $\hat g_0$ is indeed a horizontal lift. We also need to verify that $\pi \circ \hat g_0 = \pi \circ \hat g$:
    \begin{align}
        \pi \circ \hat g_0 &= 
        gh \exp (\Ad_{h^{-1}}\hat Y)H= gh h^{-1}\exp(\hat Y) h H \\
        &= g\exp(\hat Y) H = \pi \circ \hat g.
    \end{align}

    Let $\eta(g)$ and $\eta_0(gh)$ be the left-invariant errors of $\hat g$ and $\hat g_0$. Then, we have
    \begin{align}
        \lVert \eta(g) \rVert^2 &= \lVert \hat Y \rVert^2 = \lVert \Ad_{h^{-1}}\hat Y \rVert^2 = \lVert \eta_0(gh) \rVert^2
    \end{align}
    due to the $\Ad_H$-invariance of the norm. Noting that $\tr(\overline {\mf \Sigma}(g)) = \mathbb E\left[\lVert {\bm \upeta}(g) \rVert^2\right]$ completes the proof of the first part.
    
    To show the second part, we use Lemma \ref{lem:fim-properties} and (\ref{eq:fimL-gh}):
    \begin{align}
        \tr\big(\overline {\mf F}_{gh}^{-1}\big) 
        &= \tr\big(\bm \Pi \mf F_{gh}^L \bm \Pi^\top\big) 
        = \tr\big(\bm \Pi \bm \Ad_{h}^\top \mf F_{g}^L \bm \Ad_{h} \bm \Pi^\top\big) \nonumber\\
        &=\tr\left(\bm \Pi \bm \Ad_{h}^\top 
            \big(\bm \Pi^\top \overline {\mf F}_{g}^{-1}\bm \Pi \big) \bm
            \Ad_{h} \bm \Pi^\top\right) \nonumber\\
            &= \tr\big(\bm \Ad_h\rvert_{\mf m}^\top \ \overline {\mf F}_{g}^{-1}\,\bm \Ad_h\rvert_{\mf m}\big) = \tr\big(\overline {\mf F}_{g}^{-1}\big)
    \end{align}
    since $\bm \Ad_h\rvert_{\mf m} \bm \Ad_h\rvert_{\mf m}^\top = \mf I$.
\end{proof}

Lemma \ref{lem:variance} shows that there is a consistent definition of the \textit{variance} of $\hat \theta$ at $gH$ that is independent of the choice of representative $g$, i.e., $\var(gH) \coloneq \tr \left( \overline {\mf \Sigma}(g) \right) = \mathbb E[\lVert \bar {\bm \upeta}(g) \rVert^2]$. The intuition behind Lemma \ref{lem:variance} is that the basis we defined in (\ref{eq:LIVF_basis_on_GH}) {rotates} (but does not shrink or enlarge) when we move from the point $g$ to $gh$ on the same fiber. Thus, error vector $\bar{\bm \upeta}(g)$ depends on the choice of representative $g$, its length does not.

Taking the trace of (\ref{eq:crb-GH}), we get
$\tr\big(\overline {\mf F}_g^{-1}\big) 
        \preccurlyeq \var (gH),$
which is a bound on the variance of an unbiased estimator that is accurate up to an 
$O({\bm \Delta}(g))$ error. Importantly, this bound is independent of the choice of representative $g$ of $gH$, provided that the condition in Lemma \ref{lem:variance} holds.

\subsection{{Parameter Estimation} on $H\backslash G$}\label{sec:HG}
In certain applications, it is more natural to stipulate that the group $H$ acts on the left, leading to a parameter estimation problem on the right coset space $H \backslash G \coloneq \lbrace Hg \, | \, g\in G\rbrace$. In this case, the properties of the left and right {\bf FIM}s in Lemma \ref{lem:fim-properties} are interchanged, and therefore $\overline {\mf F}_g$ is defined as a sub-matrix of ${\mf F}_g^R$ instead of ${\mf F}_g^L$. Similarly, the roles of LIVFs and RIVFs are reversed; the left-invariant error $\eta(g)$ is replaced by the right invariant error $\eta'(g)$, while the gradient vector used in (\ref{eq:efficient-estimator}) is redefined w.r.t. the RIVFs instead of LIVFs.

\section{Applications}\label{sec:applications}
In this section, we discuss several promising applications of the theory developed in the preceding sections.

\subsection{Computation of the FIM and CRB}
If one has access to the pdf $\bar f(\theta)$, then the {\bf FIM} may be computed analytically using the definitions in Section \ref{sec:FIM}.
An alternative way to compute the {\bf FIM} is by relating it to the Hessian of $\ell(g)$:
\begin{align}\mf F_{ij,g}^L = - \mathbb E \left[E_j^L E_i^L \ell(g)\right]=- \mathbb E \left[E_i^L E_j^L \ell(g)\right].
\label{eq:fim-hessian}
\end{align}
The derivation of this identity follows similarly to the Riemannian case \cite{bonnabel2015intrinsic}, and is presented in the supplementary document. 
Recently, the authors of \cite{habi2023learning} proposed a way to use generative models to estimate the FIM and CRB. This is done by using a dataset of observation-parameter pairs $\lbrace(x_1, \theta_1), (x_2, \theta_2), \cdots, (x_n, \theta_n)\rbrace$ to learn the pdf $\bar f(x|\theta)$, after which standard auto-differentiation toolboxes are used to estimate the FIM. Empirical estimation of the FIM is also studied in the deep learning literature \cite{zhang2019fast,karakida2020understanding,amari1998natural,martens2020new,faria2023fisher}.

Oftentimes, one is able to modify the pdf $\bar{f}(\theta)$ (which in turn modifies the FIM and its associated CRB) directly. For instance, in the case where $x\in \mathcal X$ represents a vector of measurements obtained over a \textit{sensor network}, the pdf $\bar{f}(\theta)$ depends on the connectivity of the network as well as the pose of each sensor. Thus, one can consider minimizing the trace or determinant of the FIM to determine the optimal sensor configuration, subject to practical constraints such as a limited resource of sensors, desired region of coverage, etc. \cite{chen2021cramer}.

\subsection{Iterative Parameter Estimation}


Assume that $(x_1, x_2, \ldots, x_m)$ is a sequence of $m$ independent samples from the pdf $f(g)$, with $m>0$. The pdf of the observed measurement sequence is $\Pi_{i=1}^m f(x_i|g)$ and its logarithm is $\sum_{i=1}^m \ell(x_i|g)$. Due to (\ref{eq:fim-hessian}), the {\bf FIM} is the sum of the {\bf FIM}s of the individual measurements, i.e., $\sum_{i=1}^m \overline {\mf F}_{g} = m\,\overline{\mf F}_{g}$. Let $\hat g$ represent an unbiased estimator of $g$.
Using Corollary \ref{cor:efficiency}, we have 
\begin{align}
     \log(g^{-1}\hat g)^\vee \propto \frac{1}{m}  \bm \Pi^\top \overline {\mf F}_g^{-1} \sum_{i=1}^m\overline{\mf{grad}}\,\ell(x_i | g) 
     \label{eq:efficiency_multiplemeasurements}
\end{align}
up to an $O(\mf \Delta)$ error, where $\hat g$ is understood as $\hat g(x_1, \ldots, x_m)$. In some parameter estimation problems (e.g., least squares regression) one can solve (\ref{eq:efficiency_multiplemeasurements}) to derive the efficient estimator in closed-form. Alternatively, one sets the gradient vector $\overline{\mf{grad}}\,\ell$ to zero, yielding the {Maximum Likelihood Estimator (MLE)} of $g$.
When an analytical (closed-form) estimator is intractable, an iterative approach is sought.

One strategy is to consider the following iteration, which we call the \textit{generalized Fisher scoring} algorithm:\footnote{For parameter estimation on the \textit{right} coset space, $H \backslash G$, the exponential map in (\ref{eq:fisher-scoring}) is placed to the \textit{left} of $\hat g_n$; also see Section \ref{sec:HG}.}
\begin{align}
    \hat g_{k+1} = \hat g_{k}\exp\left(\bigg(\frac{1}{m}\bm \Pi^\top \overline {\mf F}_{\hat g_k}^{-1}\sum_{i=1}^m  \,\overline{\mf{grad}}\,\ell(x_i | \hat g_k)\bigg)^\wedge \right)
    \label{eq:fisher-scoring}
\end{align}
where $\hat g_0$ is an initial guess of $g$.
The iteration in (\ref{eq:fisher-scoring}) is derived by considering a \textit{gradient ascent} approach for maximizing $\ell(\,\cdot\,)$ while using the FIM as the Riemannian metric.\footnote{If $\mf G$ is the metric tensor of a Riemannian manifold at some point, then the covector with coefficients $\mf v^\top$ is said to be \textit{paired} with (or \textit{dual} to) the vector $\mf G^{-1} \mf v$. The operation $\mf v^\top \mapsto \mf G^{-1} \mf v$ 
is a generalization of the vector transpose. In (\ref{eq:fisher-scoring}), we use the {\bf FIM} as the metric tensor to convert the gradient covector into a vector; see \cite[p. 342]{lee2012smooth} for details.} In light of the relationship of the FIM to the Hessian, we can also view (\ref{eq:fisher-scoring}) as the second-order \textit{Newton-Raphson} method (as done in \cite[Sec. 9.6]{Li2019}).
Its fixed points (i.e., the points $\hat g_k \in G$ that satisfy $\hat g_{k+1} = \hat g_k$ when subjected to the iteration) include the MLE, and potentially other local extrema. The matrix $\bm \Pi^\top$ ensures that the estimator does not update along the fiber, while the positive definite matrix $\overline {\mf F}_{\hat g_k}^{-1}$ scales the gradient appropriately, respecting the inherent Fisher-Rao geometry of the parameter estimation problem. For this reason, (\ref{eq:fisher-scoring}) (or rather, the Euclidean version of it) is also called \textit{natural gradient ascent} in the machine learning literature \cite{amari1998natural,martens2020new}. Importantly, note that (\ref{eq:fisher-scoring}) does not require a step-size, a fact which we verify through numerical simulations in the next section.

When $\Theta$ is a vector space, it reduces to the well-known (Euclidean) Fisher scoring algorithm, which can be shown to achieve the CRB within a single update of the algorithm, in the limit $m \rightarrow \infty$ \cite[Thm. 9.4]{Li2019}. 
 


\subsection{Group-Invariant Parameter Estimation}
\label{sec:group-invariance}
Homogeneous spaces can be used to systematically study parameter estimation problems which are invariant w.r.t. a continuous group of symmetries, where the group of symmetries is represented by $H$. Examples of parameter estimation problems which exhibit the aforementioned \textit{group-invariance}\footnote{Note that this is different from the predominant use of the words \textit{invariance} and \textit{equivariance} in the deep learning literature, which refers to the action of a group on the \textit{observations} rather than on the parameters \cite{bronstein2017geometric}.} phenomenon include
\begin{itemize}
    \item State estimation of symmetric objects, where $G = SE(3)$ describes the object's pose and $H \cong SO(2)$ describes rotations about its axis of symmetry. The observed measurements (e.g., point clouds) are statistically invariant under rotations by $H$ \cite{merrill2022symmetry}.
    \item Embedding words into a vector space, where $H = O(d)$ is the group of orthogonal transformations. Since the cosine similarity between a pair of vectors is invariant under the action of $H$, two embeddings that are related by an $O(d)$ transformation are equivalent \cite{yin2018dimensionality,mahadevan2015reasoning}.
\end{itemize}

To further illustrate our point, we consider the following examples from engineering. These examples demonstrate how our theory can be applied to systematically study statistical models that are invariant w.r.t. the action of some group.

\subsubsection*{\textbf{Example 1.} Pose-Estimation using Landmarks}
Let $g\in SE(3)$ describe the \textit{pose} (i.e., position and orientation) of a mobile robot. We write $g = (\mf R, \mf p)$, such that $\mf R \in SO(3)$ and $\mf p \in \mathbb R^3$ are understood to refer to the rotational and translational components of $g$, respectively. To estimate its own pose, the robot uses a \textit{landmark}, whose position $\mf a \in \mathbb R^3$ is known in a global frame of reference. The location of the landmark is measured in the robot's local frame of reference as the vector $\mf x \in \mathcal X = \mathbb R^3$ which is subject to standard Gaussian noise, yielding the likelihood function
\begin{align}
    f(\mf x\,|\,g) \propto \exp\left(-\frac{1}{2}\left\lVert \mf x - \mf R^\top (\mf a  - \mf p) \right\rVert^2\right)
    \label{eq:likelihood_landmark}
\end{align}
with $\lVert \, \cdot \, \rVert$ denoting the Euclidean norm \cite{chirikjian2014gaussian}. The problem of estimating $\mf R$ and $\mf p$ based on the pdf (\ref{eq:likelihood_landmark}) can be viewed as a generalization of \textit{Wahba's problem}. In Wahba's problem, $\mf p$ is assumed to be known, so that only $\mf R \in SO(3)$ needs to be estimated. Wahba's problem was used to validate the Lie group CRBs in \cite{bonnabel2015intrinsic} and \cite{solo2020cramer}.

It may be verified that $f(\mf x|g) = f(\mf x|hg)$ for all $h\in H$, where $H\subseteq SE(3)$ is defined as follows:
$$H\coloneq\left\lbrace \big(\mf Q, (\mf I - \mf Q)\mf a\big)\,\big|\,\mf Q \in SO(3)\right\rbrace.$$
This is precisely the subgroup of $SE(3)$ comprising transformations that have the point $\mf a$ as their fixed point.
Thus, we can view the pose estimation problem w.r.t. a single landmark as a parameter estimation problem on $H \backslash SE(3)$. Observe that $H$ is three-dimensional, so that $n_H=3$ and $n_{\Theta}= n_G - n_H = 3$.

Consider arbitrary matrices $X_i, X_j \in\mathfrak{se}(3)$, where we write $X_i = (\mf \Omega, \mf v_i)$ with $\mf \Omega_i \in \mathfrak{so}(3)$ and $\mf v_i \in \mathbb R^3$ to refer to the rotational and translational components of $X_i$.
We have the following formulae, the algebraic details of which can be found in the supplementary material:
\begin{align}
    X_i^R \ell(g) &= (\mf \Omega_i \mf a + \mf v_i)^\top
    (\mf a - \mf p - \mf R x)\\
    F(X_i^R, X_j^R)_g &= (\mf \Omega_j \mf a + \mf v_j)^\top
    (\mf \Omega_i \mf a + \mf v_i ),
    \label{eq:second-deriv}
\end{align}
which allow us to compute the gradient and {\bf FIM} of $\ell$ in an arbitrary basis. 

If we have two landmarks $\mf a_1,\mf a_2 \in \mathbb R^3$ (such that $\mf a_1 \neq \mf a_2$) which each generate a measurement independently, then the combined likelihood function $f(g)$ is given by a product pdf. While the measurement of each landmark is invariant to transformations of the form $\left(\mf Q, (\mf I - \mf Q)\mf a_i\right)$, the transformations which leave \textit{both} measurements statistically unchanged are those that satisfy $(\mf I - \mf Q)\mf a_1 = (\mf I - \mf Q)\mf a_2$, that is, $ \mf Q(\mf a_1 - \mf a_2) = \mf a_1 - \mf a_2$. Such transformations correspond to rotations of $\mathbb R^3$ about the axis passing through $\mf a_1$ and $\mf a_2$, which is a one-dimensional subgroup of $SE(3)$.
Thus, for the two-landmark problem, we have $n_H=1$ and $n_{\Theta}= 5$. 

In Fig. \ref{fig:multi_Start}, we visualize the MLE of the robot's pose using two landmarks, each of which generates $10,000$ independent measurements. The generalized Fisher scoring algorithm (\ref{eq:fisher-scoring}) is used to compute the MLE across 4 different simulations. For each simulation, a semi-transparent red frame is used to show the initial estimate $\hat g_0$, and a line is used to depict the trajectory of estimates $(\hat g_1,\hat g_2,\ldots,\hat g_{k})$ which converges to the MLE.
While different initializations of the algorithm result in different estimates in $G$, these correspond to the same estimate on $H\backslash G$, i.e., they differ by an element of $H$. Figure \ref{fig:fisher-scoring} depicts the log-likelihood function $\ell(\mf x | \hat g_k)$ (whose normalization constant is neglected) of the estimates computed using the generalized Fisher scoring algorithm. The algorithm maximizes $\ell(\mf x|\,\cdot\,)$ (i.e., computes the MLE) in a few iterations despite not needing any tuning parameters aside from the initial estimate, $\hat g_0$ (which was chosen as the identity matrix). In contrast, a first-order gradient ascent approach for computing the MLE requires a {step-size} that tends to $0$, whose initial value and decay rate need to be tuned on a heuristic basis.
Finally, Figure \ref{fig:check_FIM} shows the variance of the MLE as well as the corresponding lower bound (obtained by taking the {trace} of inequality (\ref{eq:crb-GH-unbiased})) as a function of the number of measurements. The plot uses a $\log$ scale for either axis.
To compute the expectations, we averaged across $100,000$ Monte Carlo simulations.
The variance of the MLE on $H\backslash G$ (which was defined in Section \ref{sec:variance}) approaches the CRB.
On the other hand, the usual group-theoretic variance on $G$, which is defined as $\mathbb E\left[\|{\bm \upeta}(g)\|^2\right]$, does not approach the CRB, since it does not account for the invariance of the statistical model to the transformations in $H$.
\begin{figure}[h]
    \centering
    \includegraphics[width=0.5\textwidth, trim=4.0cm 5.7cm 2.9cm 4.35cm, clip]{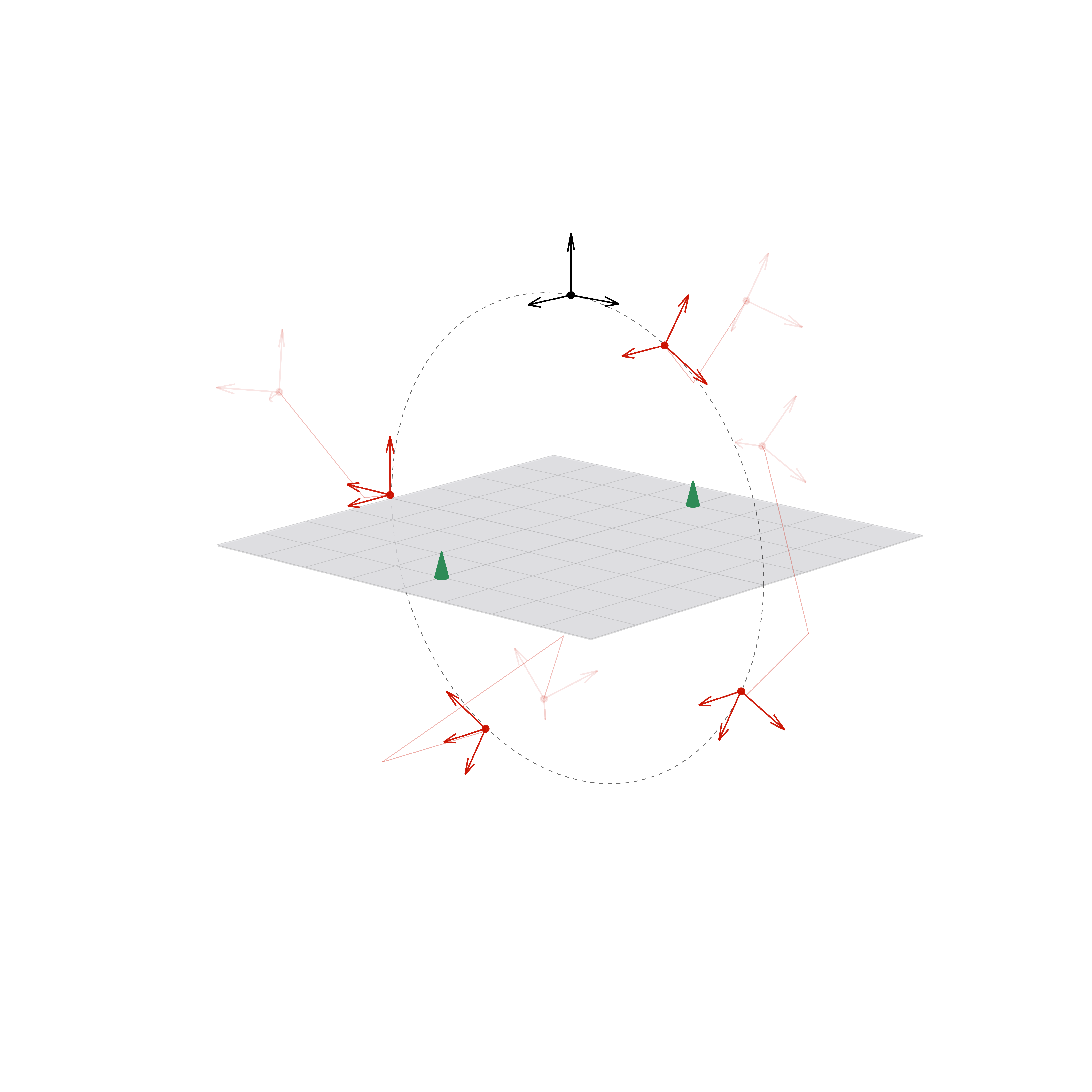}
    \caption{(\textit{Pose-Estimation using Two Landmarks}) The {black frame} represents the true pose $g \in SE(3)$ of the robot while the {green cones} represent landmarks. The circle depicts the set of poses from which the observed displacements of the landmarks are identical. The red frames represent the estimates of $g$ computed using the Fisher scoring algorithm for different initializations of the algorithm.}
    \label{fig:multi_Start}
\end{figure}
\begin{figure}[h]
    \centering
    \includegraphics[width=0.49\textwidth, trim=0.45cm 0.42cm 0.23cm 0.26cm, clip]{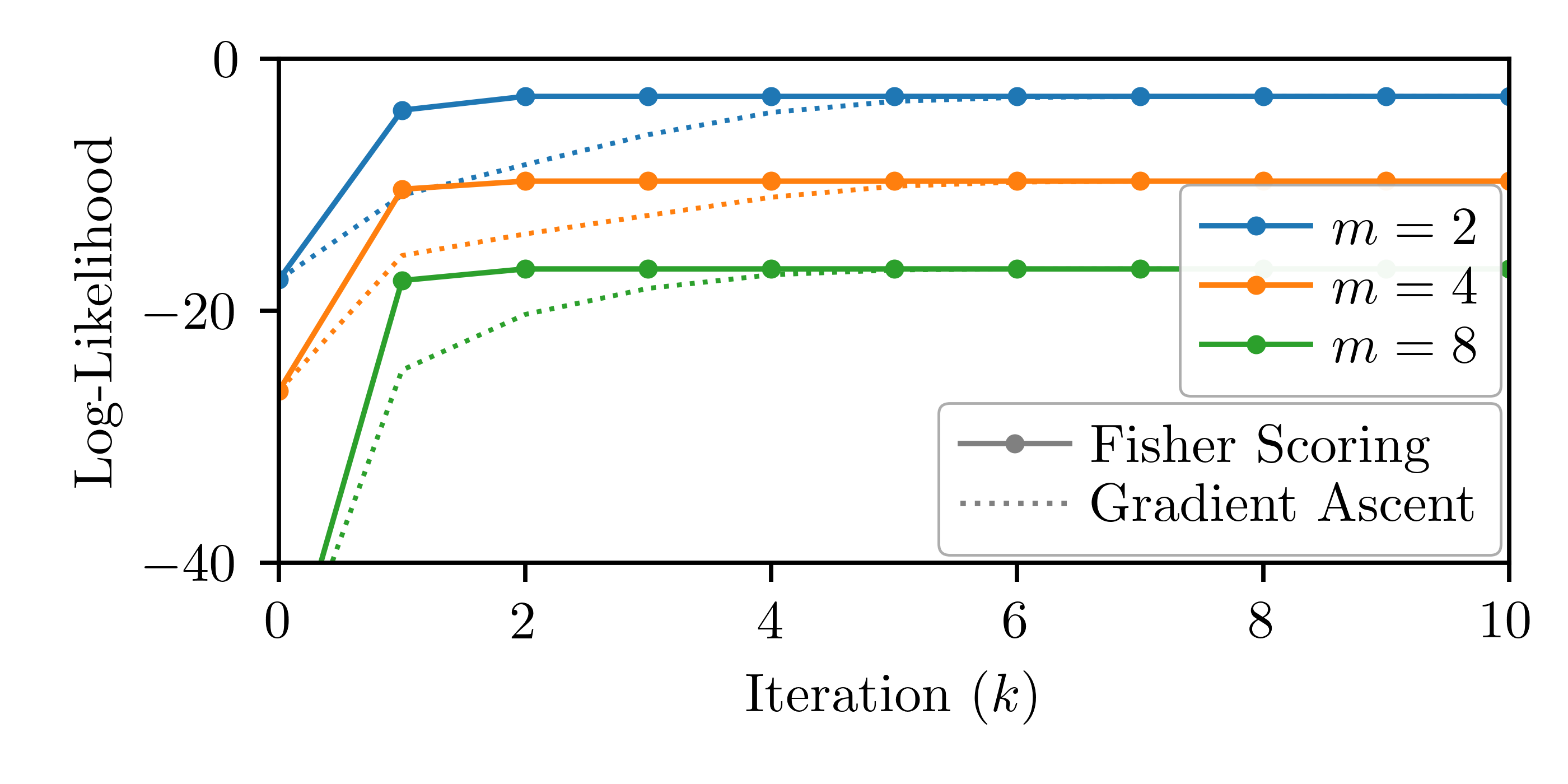}
    \caption{The generalized Fisher scoring algorithm (\ref{eq:fisher-scoring}) is compared with a first-order gradient ascent algorithm for computing the Maximum Likelihood Estimator (MLE) by maximizing the log-likelihood function. The simulation is repeated thrice for different values of $m$, where $m$ denotes the number of independent measurements generated by each landmark.}
    \label{fig:fisher-scoring}
\end{figure}

\begin{figure}[h]
    \centering
    \includegraphics[width=0.49\textwidth, trim=0.45cm 0.42cm 0.23cm 0.2cm, clip]{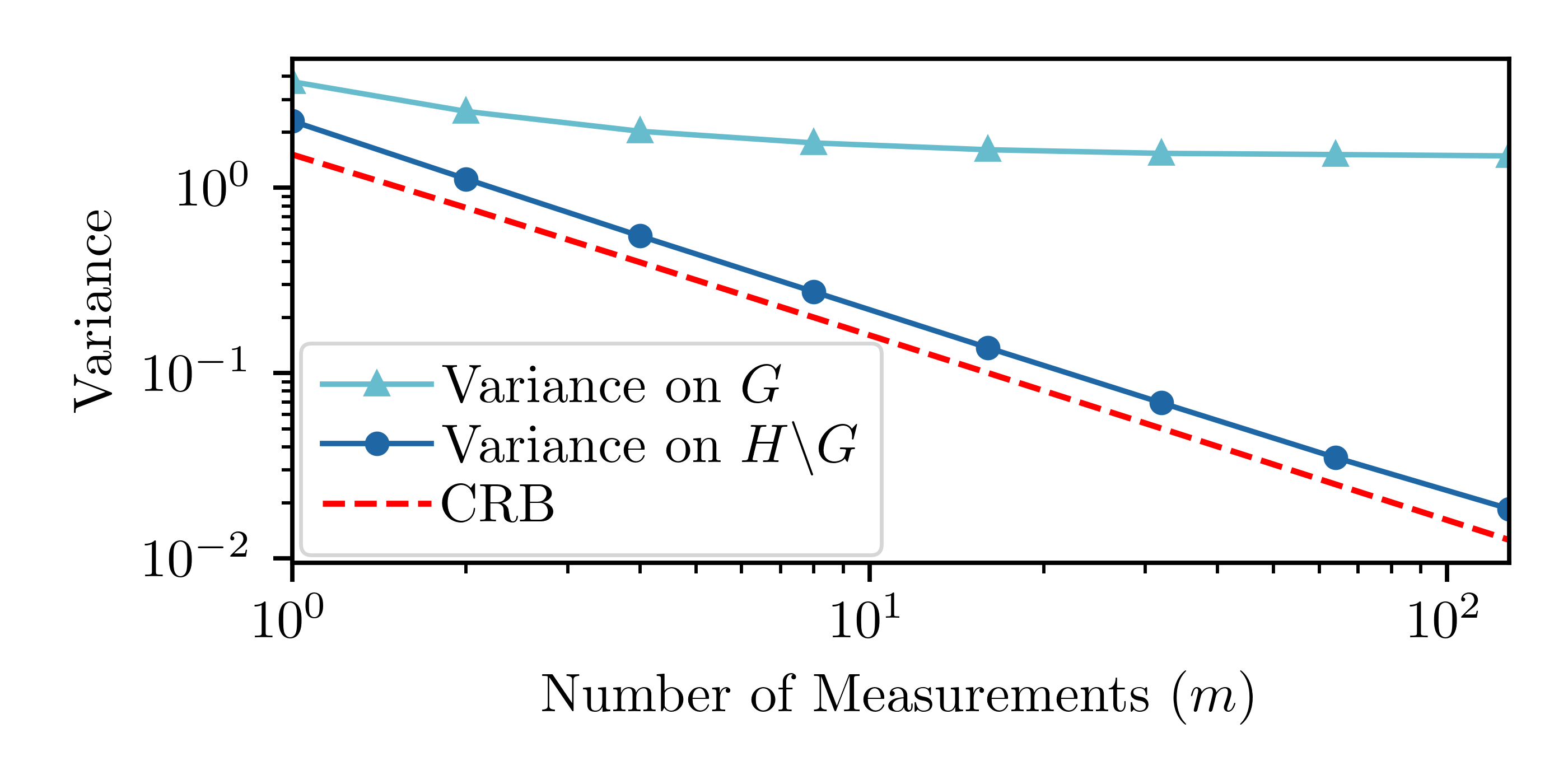}
    \caption{The variance of the MLE is plotted against the number of measurements using a $\log$ scale for either axis. The variance of the MLE on $G$, defined as $\mathbb E\left[\|{\bm \upeta}(g)\|^2\right]$, does not approach the CRB. On the other hand, the quantity $\mathbb E\left[\|\bar {\bm \upeta}(g)\|^2\right]$, which is the variance on $H\backslash G$ as defined in Section \ref{sec:variance}, approaches the CRB as $m \rightarrow \infty$.}
    \label{fig:check_FIM}
\end{figure}

\begin{remark}[Relationship to the Riemannian CRB]
It is possible to repeat the above analysis using the Riemannian definitions used in \cite{smith2005covariance} and \cite{boumal2013intrinsic}. However, the computations are significantly more involved. While the $\exp(\,\cdot\,)$ map of $SE(3)$ is readily implemented using the closed-form formulae given in \cite{chirikjian2011stochastic}, it is different from the Riemannian exponential map (regardless of the choice of Riemannian metric on $SE(3)$ \cite{zefran1999metrics}), which requires solving an ordinary differential equation at each point. See Appendix \ref{app:relationship} for more details.
\end{remark}


\subsubsection*{\textbf{Example 2.} Sensor Network Localization using Distances}
As a second example, consider the problem of localizing (i.e., estimating the spatial configuration of) a sensor network using the measured distances between the agents. The sensor network is represented by the graph $(\mathcal V, \mathcal E)$ where the vertices $\mathcal V =\lbrace 1, 2, \ldots, |\mathcal V|\rbrace$ represent the agents (with $|\,\cdot\,|$ denoting the cardinality of a set), and the edges $\mathcal E \subseteq \mathcal V \times \mathcal V$ represent the pairs of agents who can measure their distance from each other.
The position of the $i^{th}$ agent is $\mf p_i \in \mathbb R^d$, where we typically set $d=2$ or $3$ depending upon the application \cite{zelazo2015decentralized,cano2021improving}.
Corresponding to the $i^{th}$ agent, we define the transformation $g_i\in SE(d)$, which we will also write as $(\mf R_i, \mf p_i)$. Here, the rotation $\mf R_i$ can be viewed as an auxiliary variable; it helps us describe the underlying homogeneous space as a coset space.
The collective configuration of the sensor network is given by $g=(g_1, g_2, \ldots, g_{|\mathcal V|}) \in SE(d)^{|\mathcal V|}$.

Let $\varphi: SE(d) \rightarrow \mathbb R_{\geq 0}$ be the mapping $(\mathbf R, \mathbf p) \mapsto \frac{1}{2}\lVert \mf p \rVert^2$, with $\lVert \, \cdot \, \rVert$ denoting the vector 2-norm. 
Observe that
$\varphi (g_i^{-1}g_j) = \frac{1}{2}\lVert \mf p_i - \mf p_j \rVert^2$ is a measure of the distance between agents $i$ and $j$.
Letting $x_{ij}$ represent the measurement obtained along the edge $(i,j)\in \mathcal E$, we consider the following statistical model for the measurements:
\begin{align}
    f(x|g) \propto \exp\bigg(-\sum_{(i,j) \in \mathcal E}\frac{1}{2\sigma_{ij}^2} \Big(x_{ij} - \varphi (g_i^{-1} g_j)\Big)^2\bigg),
\end{align}
which assumes (for the sake of simplicity) that the measurement noise at each edge is Gaussian and independent of the measurements at the other edges.
It is readily verified that $\varphi (g_i^{-1}g_j) = \varphi ((hg_i)^{-1}(hg_j))$
for all $i,j \in \mathcal V$.
This observation leads us to consider the following subgroup of $SE(d)^{|\mathcal V|}$, representing the group of symmetries w.r.t. which the measurement model is invariant:
\begin{align}
    \tilde H \coloneqq \lbrace (h,h, \ldots,h) \, |\, h \in SE(d) \rbrace \; \subseteq \; SE(d)^{|\mathcal V|}.
    \label{eq:subgroup}
\end{align}
The subgroup $\tilde H$ corresponds to rigid translations and rotations of the entire sensor network, which leaves the pairwise distances between the agents invariant.
However, there are additional symmetries which must be included, each corresponding to a mapping of the form $(\mf R_i, \mf p_i) \mapsto (\mf Q \mf R_i, \mf p_i)$ which does not change agent $i$'s position in $\mathbb R^d$. Let $H$ denote the subgroup of $SE(d)^{|\mathcal V|}$ which includes these symmetries in addition to those in $\tilde H$.
Thus, we have a parameter estimation problem on the homogeneous space $\Theta \cong H \backslash SE(d)^{|\mathcal V|}$.

Recall that the dimensions of $SE(d)$ and $SO(d)$ are $\binom{d+1}{2}$ and $\binom{d}{2}$, respectively, where $\binom{n}{k}$ represents the binomial coefficient with indices $n$ and $k$.
Therefore, the dimension of $G$ is $n_G = \binom{d+1}{2}|\mathcal V|$, while the dimension of $H$ is $n_H = \binom{d}{2}|\mathcal V| + \binom{d+1}{2}$ corresponding to the rotations of the individual agents and the subgroup $\tilde H$ described in (\ref{eq:subgroup}), respectively. The dimension of $\Theta$ (which is also the dimension of $\mf m$) is computed as $n_{\Theta} = n_G - n_H$, which is found to be $n_{\Theta}=d|\mathcal V| - \binom{d+1}{2}$.

Specializing to the case of $d=2$, we see that $\mf m$ is a $2|\mathcal V| - 3$ dimensional vector space. One way to choose a basis for $\mf m$ is as follows. We assume without loss of generality that $\mf p_1$ and $\mf p_2$ lie on the $\textrm{y}$-axis of our coordinate system.
Consider the $2|\mathcal V|$ vectors in $\mathfrak g$ that represent the $2$-dimensional translation of each of the $|\mathcal V|$ agents. We discard the first $3$ vectors in this sequence, resulting in a total of $2|\mathcal V| - 3$ basis vectors as desired. This choice of basis represents all the perturbations of the sensor network in which agent $1$'s coordinates are held fixed, while only the $\textrm{y}$-coordinate of agent $2$ is allowed to vary.
On the other hand, we can consider what would happen if the algorithm were to update along the directions in $\mathfrak h$. When the estimate is updated using a vector in $\mathfrak h$, the entire sensor network is moved in a rigid manner, as depicted by the green configuration in Fig. \ref{fig:distances}.
This suggests that additional measurements and/or regularization techniques can be used to update the estimate along $\mathfrak h$, for e.g., a sparsity-promoting regularization term is used in \cite{khan2023recovery} to resolve the ambiguity along $\mathfrak h$.

\begin{figure}[h]
    \centering
    \includegraphics[width=0.49\textwidth, trim=2.3cm 4.7cm 1.9cm 4.9cm, clip]{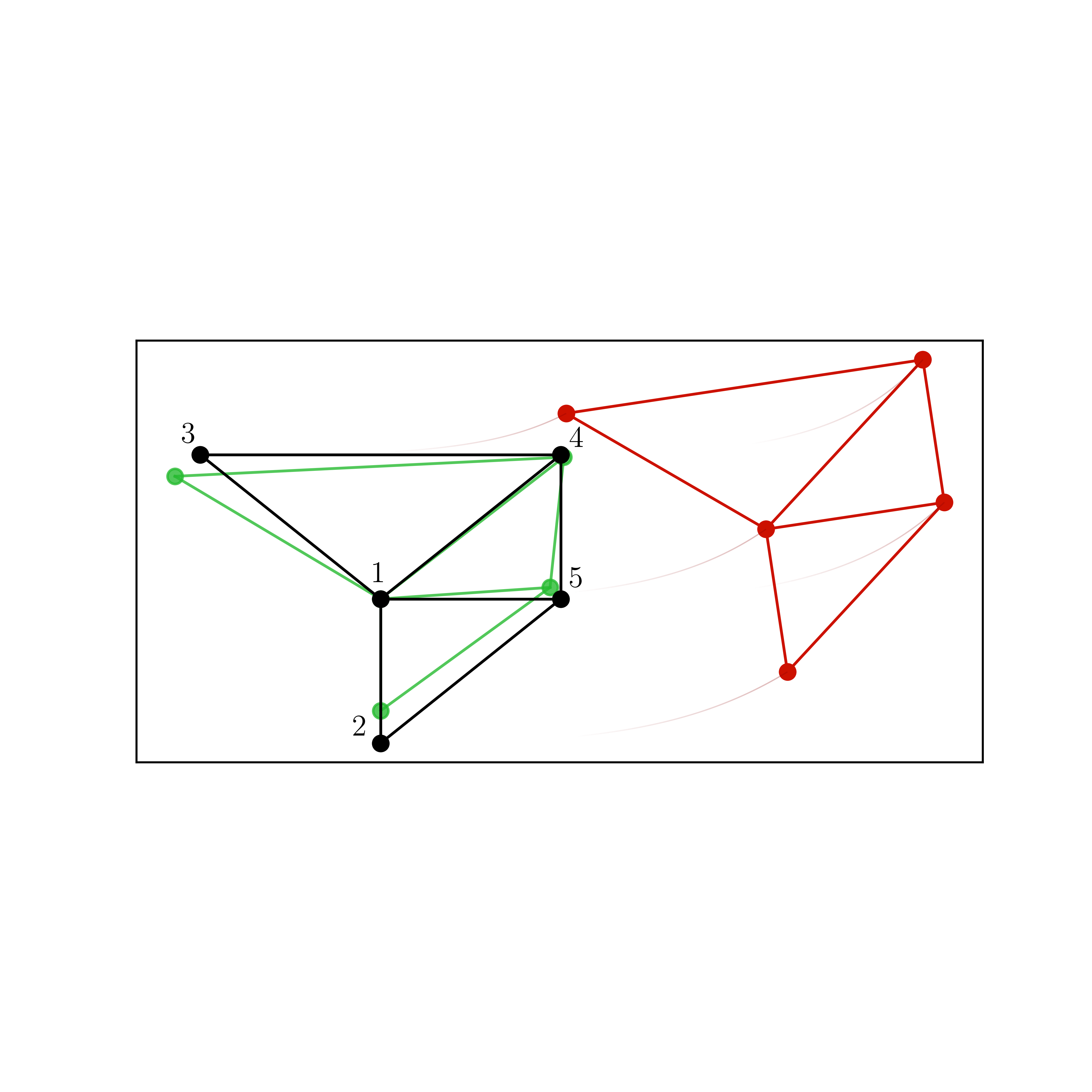}
    \caption{The agents of the sensor network are shown as black dots while the solid lines represent the edges between them. The black and green configurations are related by a transformation of the form $\exp(Z)$ with $Z\in \mf m$.
    The black and red configurations are related by an element of $H$, which does not change any of the edge lengths.}
    \label{fig:distances}
\end{figure}

It can be shown that the {\bf FIM} is a submatrix of the so-called \textit{Symmetric Rigidity Matrix} $\mf S_g \in \mathbb R^{2|\mathcal V|\times 2|\mathcal V|}$ \cite{zelazo2015decentralized,le2018localizability}, whose $(i,j)^{th}$ block is defined as
\begin{align*}
    {\mf S}_{ij,g} =
    \begin{cases}
        \sum_{k\in \mathcal N_i}\frac{1}{\sigma_{ik}^2}
        \left(\mf p_i  - \mf p_k \right)\left(\mf p_i  - \mf p_k \right)^\top,\ &i=j\\
        -\frac{1}{\sigma_{ij}^2}
        \left(\mf p_i  - \mf p_j \right)\left(\mf p_i  - \mf p_j \right)^\top,\ &(i,j)\in \mathcal E\\
        \mf 0 &\text{ otherwise}.
    \end{cases}
\end{align*}
By discarding the first three rows and columns of $\mf S_{g}$, we obtain the {\bf FIM}, $\overline{\mf F}_g$. It is well-known that the rank of $\mf S_g$ is at most $2|\mathcal V| - 3$ (for the case of $d=2$) and that its smallest non-zero eigenvalue is closely related to the robustness of sensor network localization algorithms \cite[Thm. 4]{khan2023recovery}. Indeed, our analysis above has shown that the accuracy of sensor network localization depends on the smallest non-zero eigenvalue of $\overline{\mf F}_g$, which is precisely the spectral radius of the matrix $\overline{\mf F}_g^{-1}$ that appears in the CRB.


\section{Conclusion}\label{sec:conclusion}
In this article, we developed the theory of parameter estimation on homogeneous spaces by introducing group-theoretic definitions of the estimation error, bias, and covariance. By characterizing the properties of the Fisher Information Metric (FIM), we proceeded to derive exact and approximate Cram\'er-Rao Bounds (CRBs) on Lie groups and homogeneous spaces. The relationship of these CRBs with the intrinsic (Riemannian) versions was also established.
Illustrative applications were used to demonstrate that the theory can be used to tackle parameter estimation problems where the parameters of interest are \textit{invariant} with respect to the action of a given Lie group.

Future work can consider specializing the Bayesian CRB (also called the Van Trees CRB) to homogeneous spaces. Another promising line of research is to theoretically investigate the convergence properties of the \textit{generalized Fisher scoring} algorithm presented in Section \ref{sec:applications}.

\appendices
\section{Riemannian vs. Group-Theoretic CRB}\label{app:relationship}
In this section, we will make precise the relationship between the intrinsic (Riemannian) CRB obtained in \cite{smith2005covariance} and the group-theoretic CRB developed in this article. 
Suppose we have an estimator $\hat \theta : \mathcal X \rightarrow G/H$ which is sufficiently accurate (i.e., concentrated around the true parameter) and $G/H$ is endowed with a Riemannian metric. Then, the \textit{intrinsic error} of the estimator can be defined as 
    $\bar \eta^{\textrm{int.}}(\theta) \coloneq \log_{\theta}\hat \theta \;\in\; T_{\theta}\, G/H,$
where $\log_{\theta}$ is the \textit{Riemannian log map} at $\theta$, with $\log_{\theta}\hat \theta$ representing the initial velocity of the geodesic from $\theta$ to $\hat \theta$. If $G/H$ admits a $G$-invariant metric, then the $G$-invariant metric is a natural and commonly used choice of Riemannian metric for $G/H$ \cite{gallier2020differential,andreas_arvanitogeorgos2003}. 
Sufficient conditions under which the intrinsic CRB in\cite{smith2005covariance} (which is stated in terms of $\bar \eta^{\textrm{int.}}(\theta)$) coincides with the group-theoretic CRB of Theorem \ref{thm:crb-group-homogeneous} are recorded in the following lemma.

\begin{lemma}
Let $\theta = gH$ be a parameter and $g\in G$ any representative of $gH$. Suppose that 
\begin{enumerate}
    \item $G$ admits a bi-invariant metric, and 
    \item $G/H$ is a naturally reductive space,\footnote{The definition of a \textit{naturally reductive} space can be found in differential geometry texts \cite{cheeger1975comparison,gallier2020differential,andreas_arvanitogeorgos2003}. In particular, \textit{Riemannian symmetric spaces} are naturally reductive \cite[Sec. 23.8]{gallier2020differential}, \cite[Ch. 6]{andreas_arvanitogeorgos2003}.}
\end{enumerate} 
    then
    $
        \bar{\bm \upeta}^{\textrm{int.}}(gH) = \bar {\bm \upeta}(g),
    $
    where $\bar{\bm \upeta}^{\textrm{int.}}(gH)$ represents the intrinsic error expressed in the basis for $T_{gH}(G/H)$ given in (\ref{eq:LIVF_basis_on_GH}).
    \label{lem:conditions}
\end{lemma}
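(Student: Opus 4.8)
The plan is to compute the intrinsic error at $gH$ by transporting everything to the base point $o \coloneqq eH$ via the isometric action of $G$, and then to read off $\log_o$ from the description of geodesics on a naturally reductive space. Recall from the proof of Theorem \ref{thm:crb-group-homogeneous} that the horizontal lift $\hat g$ of $\hat \theta$ at $g$ has the form $\hat g = g \exp \hat Y$ with $\hat Y : \mathcal X \to \mf m$, so that $\eta(g) = \log(g^{-1}\hat g) = \hat Y \in \mf m$ and the vector $\bar{\bm \upeta}(g)$ is exactly the list of components $\hat Y^\vee_i$ for $i = n_H+1, \dots, n_G$ (the first $n_H$ components vanish because $\hat Y \in \mf m$). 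It therefore suffices to show that the coordinates of $\log_{gH}(\hat g H)$ in the basis $\{\, d\pi_g(E_{i,g}^L) \,\}_{i=n_H+1}^{n_G}$ of $T_{gH}(G/H)$ are these same numbers.

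First I would fix the Riemannian metric: by condition (1), $G/H$ inherits a $G$-invariant metric (for instance, the one induced on $\mf m = \mathfrak h^\perp$ by a bi-invariant inner product of $\mathfrak g$, which is automatically $\Ad_H$-invariant), with respect to which every $\tau_g : G/H \to G/H$, $\tau_g(g'H) \coloneqq gg'H$, is an isometry. Condition (2) then supplies the classical fact that the geodesics of $G/H$ emanating from $o$ are precisely the curves $t \mapsto \exp(tX)\cdot o = \pi(\exp(tX))$ with $X \in \mf m$; since $\hat Y$ has small norm, the curve $t \mapsto \exp(t\hat Y)H$ is the (unique minimizing) geodesic from $o$ to $\exp(\hat Y)H$, so $\log_o\big(\exp(\hat Y)H\big) = d\pi_e(\hat Y)$.

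Next I would move this to $gH$. Since $\tau_g$ is an isometry with $\tau_g(o) = gH$ and $\tau_g\big(\exp(\hat Y)H\big) = \hat g H$, isometry-invariance of the Riemannian logarithm gives $\log_{gH}(\hat g H) = (d\tau_g)_o\big(\log_o(\exp(\hat Y)H)\big) = (d\tau_g)_o\big(d\pi_e(\hat Y)\big)$. Differentiating the intertwining relation $\pi \circ L_g = \tau_g \circ \pi$ at $e$ yields $(d\tau_g)_o \circ d\pi_e = d\pi_g \circ (dL_g)_e$, hence $\log_{gH}(\hat g H) = d\pi_g\big((dL_g)_e \hat Y\big) = d\pi_g(g\hat Y)$. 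Expanding $\hat Y = \sum_{i=n_H+1}^{n_G} \hat Y^\vee_i E_i$ (so that $g\hat Y = \sum_{i=n_H+1}^{n_G} \hat Y^\vee_i E_{i,g}^L$) and using linearity of $d\pi_g$, we obtain $\log_{gH}(\hat g H) = \sum_{i=n_H+1}^{n_G} \hat Y^\vee_i\, d\pi_g(E_{i,g}^L)$, i.e.\ the coordinates of the intrinsic error in the basis (\ref{eq:LIVF_basis_on_GH}) coincide with $\bar{\bm \upeta}(g)$, which is the claim.

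I expect the main obstacle to be the careful invocation of condition (2): pinning down the precise statement (and the reference, e.g.\ \cite{cheeger1975comparison,gallier2020differential,andreas_arvanitogeorgos2003}) that on a naturally reductive space the geodesics through the origin are orbits of one-parameter subgroups generated by $\mf m$, together with enough regularity (an injectivity-radius/concentration argument using the ``sufficiently accurate'' hypothesis) so that $\log_o$ and $\log_{gH}$ are single-valued on the relevant neighbourhoods. Everything else is formal: the $G$-invariance of the metric makes $G$ act by isometries, and $(d\tau_g)_o \circ d\pi_e = d\pi_g \circ (dL_g)_e$ is just naturality of the differential. Finally, I would remark that if one additionally chooses $\{E_i\}_{i=n_H+1}^{n_G}$ orthonormal for the $\Ad_H$-invariant inner product on $\mf m$, then $\overline{\mf F}_g$ written in the frame (\ref{eq:LIVF_basis_on_GH}) equals the intrinsic Fisher matrix of \cite{smith2005covariance} up to the curvature terms the approximate Riemannian bound neglects, so that the two CRBs---not merely the error vectors---agree.
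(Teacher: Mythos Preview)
Your proposal is correct and follows essentially the same approach as the paper: both arguments identify the geodesic from $gH$ to $\hat gH$ as $t\mapsto g\exp(t\hat Y)H$ (the paper by citing the naturally-reductive geodesic proposition directly at $gH$, you by first computing $\log_o$ and then transporting via the isometry $\tau_g$), and then read off the initial velocity as $d\pi_g(\hat Y^L_g)$ to match $\bar{\bm\upeta}(g)$. Your version is somewhat more explicit about how the $G$-invariant metric is obtained from condition~(1) and why $\tau_g$ intertwines the logarithms, but the substance is the same.
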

\begin{proof}
Condition 1 of the lemma is also required for the Lie group CRBs in \cite{bonnabel2015intrinsic,labsir2023barankin} to coincide with the intrinsic CRB. 
When the $G$ admits a bi-invariant metric, the map
\begin{align*}
    \log_g :\ 
    G &\rightarrow \;T_g G\\
    \tilde g \,&\mapsto g\log(g^{-1}\tilde g)
\end{align*}
is precisely the Riemannian log map \cite[Corr. 3.19]{cheeger1975comparison}. Lemma 7.5 of \cite{milnor1976curvatures} completely characterizes the Lie groups satisfying condition 1, which include all the compact Lie groups like $SO(3)$. The relevance of condition 2 is explained below.

Let $\hat g \coloneq g \exp \hat Y$ be the horizontal lift of $\hat \theta$ at $g$. Its group-theoretic error is $\eta(g)= \log(g^{-1} \hat g)=\hat Y \in \mf m$. Due to bi-invariance of the metric, $g \exp t\hat Y$ (with $t\in[0, 1]$) is the length-minimizing geodesic from $g$ to $\hat g$ \cite[Corr. 3.19]{cheeger1975comparison}. Moreover, since $\hat Y\in \mf m$ and $G/H$ is naturally reductive, the curve $g\exp(t \hat Y)H$ is a geodesic connecting $\theta$ and $\hat \theta$ on $G/H$ \cite[Prop. 23.28]{gallier2020differential}. The initial velocity of this geodesic is $\bar \eta^{\textrm{int.}}(gH)$ by definition, and is given by
\begin{align*}
     \frac{d}{dt} g\exp(t \hat Y)H \Big|_{t=0} = \frac{d}{dt} \pi\big(g\exp(t \hat Y)\big) \Big|_{t=0} = d \pi_g \,(\hat Y^L_g).
\end{align*}
When, expressed in the basis given in (\ref{eq:LIVF_basis_on_GH}), we have $\bar {\bm\upeta}^{\textrm{int.}}(gH) = \bm \Pi\, \hat Y^\vee =  \bm \Pi\, \bm \upeta(g)  =\bar{\bm \upeta}(g)$, which completes the proof.
\end{proof}

Finally, we remark that the $n$-dimensional sphere $S^n$ and the Grassmannian manifolds satisfy the conditions of Lemma \ref{lem:conditions} \cite{gallier2020differential} but the examples considered in Section \ref{sec:applications} do not, since $SE(3)$ does not admit a bi-invariant metric.
\section{Proof of Theorem \ref{thm:crb-group}}\label{app:proof-crb-group}
\noindent
First, we recall the following identity.
\begin{lemma}[Derivative of $\log$\cite{varadarajanlie,hall2013lie}]
    For any $X, Y\in \mathfrak g$, we have
    \begin{align}
        \frac{d}{dt} \log\big(\exp(X)\exp(tY)\big)\bigg|_{t=0} &= \Psi_X(Y)\\
        \frac{d}{dt} \log\big(\exp(tY)\exp(X)\big)\bigg|_{t=0} &=
            \Psi_{-X}(Y)
    \end{align}
    where 
    \begin{align}
        \Psi_X
        &= \frac{\ad_{X}}{{\mf I - \exp(-\ad_{X})}} = \mf I + \frac{1}{2}\ad_X + \sum_{n=1}^\infty \frac{\beta _{2n}}{(2n)!}(\ad_{X})^{2n}. \label{eq:varadarajan-formula}
    \end{align}
    Here, $\beta_n$ denotes the $n^{th}$ Bernoulli number.
    \label{lem:derivative-log}
\end{lemma}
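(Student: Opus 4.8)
The plan is to reduce both identities to the classical formula for the differential of the exponential map,
\[
\frac{d}{dt}\exp(X+tZ)\Big|_{t=0} \;=\; (dL_{\exp X})_e\!\left(\Xi_X(Z)\right), \qquad \Xi_X \coloneq \frac{\mf I - \exp(-\ad_X)}{\ad_X},
\]
which I would take from \cite{hall2013lie,varadarajanlie} (or recall via the standard ODE argument). Here $\Xi_X = \sum_{n\ge 0}\frac{(-\ad_X)^n}{(n+1)!}$ is a power series in $\ad_X$ that is invertible precisely when $\ad_X$ has no eigenvalue in $2\pi i\,\mathbb Z\setminus\{0\}$, which is exactly the regime in which the logarithms appearing in the lemma are defined; its inverse is $\frac{\ad_X}{\mf I - \exp(-\ad_X)} = \Psi_X$ by definition, so the lemma amounts to tracking the left trivialization through each product.

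For the first identity, set $\gamma(t) \coloneq \log\big(\exp(X)\exp(tY)\big)$, so that $\exp\gamma(t) = \exp(X)\exp(tY)$ and $\gamma(0) = X$. Differentiating both sides at $t=0$: the left-hand side equals $(dL_{\exp X})_e(\Xi_X(\dot\gamma(0)))$ by the $\exp$-derivative formula, while the right-hand side equals $\frac{d}{dt}\exp(X)\exp(tY)\big|_{t=0} = \exp(X)\,Y = (dL_{\exp X})_e(Y)$. Since $(dL_{\exp X})_e$ is a linear isomorphism, $\Xi_X(\dot\gamma(0)) = Y$, and applying $\Xi_X^{-1} = \Psi_X$ gives $\dot\gamma(0) = \Psi_X(Y)$.

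For the second identity, set $\delta(t)\coloneq \log\big(\exp(tY)\exp(X)\big)$, so $\exp\delta(t)=\exp(tY)\exp(X)$ and $\delta(0)=X$. The left-hand derivative at $t=0$ is again $(dL_{\exp X})_e(\Xi_X(\dot\delta(0)))$. The right-hand side is $\frac{d}{dt}\exp(tY)\exp(X)\big|_{t=0} = Y\exp(X)$; rewriting $Y\exp(X) = \exp(X)\big(\exp(-X)Y\exp(X)\big) = \exp(X)\,\Ad_{\exp(-X)}(Y) = (dL_{\exp X})_e\big(\exp(-\ad_X)Y\big)$ brings it into the same trivialized form. Cancelling $(dL_{\exp X})_e$ and applying $\Psi_X$ gives $\dot\delta(0) = \Psi_X \exp(-\ad_X)(Y) = \dfrac{\ad_X\exp(-\ad_X)}{\mf I - \exp(-\ad_X)}(Y)$, and multiplying numerator and denominator by $\exp(\ad_X)$ rewrites this as $\dfrac{\ad_X}{\exp(\ad_X) - \mf I}(Y) = \dfrac{\ad_{-X}}{\mf I - \exp(-\ad_{-X})}(Y) = \Psi_{-X}(Y)$. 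Finally, the series in \eqref{eq:varadarajan-formula} follows by substituting the Bernoulli generating function $\frac{z}{e^z - 1} = \sum_{n\ge 0}\frac{\beta_n}{n!}z^n$ into $\frac{z}{1 - e^{-z}} = \frac{-z}{e^{-z}-1}$, using $\beta_1 = -\tfrac12$ and $\beta_n = 0$ for odd $n \ge 3$, and formally replacing $z$ by $\ad_X$.

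The only real obstacle is the $\exp$-derivative formula itself: once it is granted, the rest is cancellation of an isomorphism and elementary manipulation of rational functions of $\ad_X$. If a fully self-contained treatment is wanted, I would insert the short derivation of that formula; since it is standard and available in the cited references, citing it is the cleaner choice, and the only care needed is to note that $\Xi_X$ is invertible on the relevant domain so that $\Xi_X^{-1} = \Psi_X$ is legitimate.
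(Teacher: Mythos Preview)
Your argument is correct. The paper, however, does not actually prove this lemma: immediately after stating it, the authors simply write that the first equality in \eqref{eq:varadarajan-formula} is shown in \cite[Sec.~5.5]{hall2013lie} and the second in \cite[p.~117]{varadarajanlie} and \cite[Sec.~8.1]{muger2019notes}, and then move on to use the lemma in the proof of Theorem~\ref{thm:crb-group}. So there is no ``paper's own proof'' to compare against; you have supplied a genuine derivation where the paper only cites.

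That said, your route is exactly the standard one that those references take: invoke the differential-of-$\exp$ formula $d\exp_X = (dL_{\exp X})_e\circ \Xi_X$, invert $\Xi_X$ to get $\Psi_X$, and for the second identity push the right translation through $\Ad_{\exp(-X)}=\exp(-\ad_X)$ before inverting. The Bernoulli-series step is also the textbook computation. The only point worth flagging is that you correctly note the invertibility of $\Xi_X$ (no eigenvalue of $\ad_X$ in $2\pi i\mathbb Z\setminus\{0\}$), which the paper leaves implicit. If you wanted to match the paper's style, you could simply cite the references as they do; if you want the document to be self-contained, your three-line reduction is the right thing to insert.
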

The first equality in (\ref{eq:varadarajan-formula}) was shown in \cite[Sec. 5.5]{hall2013lie}, whereas the second equality was shown in \cite[p. 117]{varadarajanlie} as well as in \cite[Sec. 8.1]{muger2019notes}. We let $\mf \Psi_{(\,\cdot\,)}$ denote the matrix representation of $\Psi_{(\,\cdot\,)}$ with respect to the basis $\lbrace E_i \rbrace_{i=1}^{n_G}$.
\begin{remark}[Bernoulli Numbers]\label{rem:bernoulli}
    The first three Bernoulli numbers are $\beta_0=1$, $\beta_1=\pm {1}/{2}$, and $\beta_2=1/6$, while $\beta_n=0$ for all odd $n>2$. The sign of $\beta_1$ depends on the convention used, for e.g., \cite{muger2019notes} and \cite{labsir2024intrinsic} use $-{1}/{2}$ while \cite{arakawa2014bernoulli} uses $+{1}/{2}$. 
    Note that (\ref{eq:varadarajan-formula}) does not depend on the convention used to define $\beta_1$.
\end{remark}
%
%
%
%
%
%
\begin{proof}[Proof of Theorem \ref{thm:crb-group}]
From the definition of the (left) bias (\ref{eq:bias}), we have
\begin{align}
    \int_{\mathcal X} \left(\eta(g) - B(g) \right) f(g) \, d\mu = 0
\end{align}
where $0$ represents the zero vector of $\mathfrak g$.
Let $X^L = x^i E_i^L$ be an arbitrary LIVF on $G$, such that $x^i$ is the $i^{th}$ component of $\mf x \in \mathbb R^{n_G}$ and $\mf x = X^\vee$.
Operating on both sides by $X^L(\,\cdot\,)$, we get
\begin{align}
    -\int_{\mathcal X} \Big(X^L &\eta\,(g) - X^L B(g)\Big) f(g) \, d\mu 
    \nonumber\\
    &= \int_{\mathcal X} \big(\eta(g) - B(g)\big)\,X^L \ell(g)\,f(g) \, d\mu
    \label{eq:derivative-bias}
\end{align}
where we used the identity, $X^L f(g) = f(g)\,X^L \ell(g)$. In (\ref{eq:derivative-bias}), the quantity $X^L \eta$ represents the Lie derivative of $\eta$ along $X^L$:
\begin{align}
    X^L \eta\,(g) &= \frac{d}{dt} \log\Big(\big(\,g \exp (t X)\big)^{-1} \hat g\Big)\bigg|_{t=0}\\
    &= \frac{d}{dt} \log\Big(\, \exp (-t X) \exp\left(\eta(g)\right)\Big)\bigg|_{t=0}\\
    &= \Psi_{-\eta(g)} (-X).
\end{align}
Returning to (\ref{eq:derivative-bias}), we can write everything in terms of vectors by using $(\,\cdot\,)^\vee$ on both sides:
\begin{align}
    \mathbb E \big[\boldsymbol\Psi_{-\eta(g)}\big] \mf x &+ \mf J_{\mf b}(g) \mf x
    = \int_{\mathcal X} \left(\bm \upeta(g) - \mf b(g)\right) X^L \ell(g) f(g) d\mu
    \label{eq:crb-proof-vector-form}
\end{align}
Choosing an arbitary vector $\mf y \in \mathbb R^{n_G}$, we operate on both sides of (\ref{eq:crb-proof-vector-form}) with $\mf y^\top(\,\cdot\,)$. The term on the right-hand side can be squared, and the Cauchy-Schwarz inequality (for the $L^2$ function space) can be used to obtain
\begin{align}
   \bigg(\int_{\mathcal X} &\mf y ^\top \big(\bm \upeta(g) - \mf b(g)\big)\sqrt{f(g)}\; X^L \ell(g) \sqrt{f(g)} d\mu\bigg)^2 \nonumber\\
   \leq &\left(\int_{\mathcal X} \mf y ^\top \left(\bm \upeta(g) - \mf b(g)\right)\,\left(\bm \upeta(g) - \mf b(g)\right)^\top \mf y\, f(g) d\mu\right)\nonumber\\
   &\quad\cdot \left(
 \int_{\mathcal X} x^i E_i^L \ell(g)\; x^j E_j^L \ell(g) \,f(g) d\mu\right)
\end{align}
After rearranging the terms and substituting into (\ref{eq:crb-proof-vector-form}), we get
\begin{align}
    \big(\mf y^\top \bm \Phi(g) \mf x\big)^2
    \;\leq\; \mf y^\top \mf \Sigma(g) \mf y \cdot \mf x^\top \mf F_g^L \mf x,
\end{align}
where $\bm \Phi(g)$ is as defined in the Theorem.
Finally, we set $\mf x = {\mf F^L_g}^\dagger \,\bm \Phi(g)^\top \mf y$, where $(\,\cdot\,)^\dagger_g$ denotes the Moore-Penrose pseudoinverse of a matrix, giving us
\begin{align}
    \mf y^\top \bm \Phi(g){\mf F^L_g}^\dagger \,\bm \Phi(g)^\top \mf y
    \;\leq\; \mf y^\top \mf \Sigma(g) \mf y 
\end{align}
for all $\mf y \in \mathbb R^n$, which proves the inequality.

When the right-invariant error $\eta'(g)$ is used, the proof is similar except
we use $X^R$ instead of $X^L$ for the differentiation.
In this case, the term $X^R \eta' (g)$ incurs a negative sign:
$X^R \eta'\,(g) = \Psi_{\eta'(g)} (-X)$.
\end{proof}

\bibliographystyle{IEEEtran}
\bibliography{sections/references}

\end{document}


\title{Supplementary Material for ``\textit{Parameter Estimation on Homogeneous Spaces}"}

\author{Shiraz Khan, Gregory S. Chirikjian,~\IEEEmembership{Fellow,~IEEE}%
\thanks{Corresponding email: \tt{shiraz@udel.edu}.}%
}

\maketitle

In this document, we provide supplementary details about the applications of the Fisher-Rao theory on homogeneous spaces. We begin by deriving the Hessian form of the Fisher Information Matrix ({\bf FIM}). Recall that $\bar f(gH) \coloneq f(\,\cdot\,|\,gH)$ is a pdf on $G/H$. We define $f\coloneq \bar f \circ \pi$ as the pullback of $\bar f$ to $G$, and $\ell(g) \coloneq \log\left(f(g)\right)$ is the log-likelihood function. 
\section{Hessian Form of the {\bf FIM}}
Since $f(g)$ is a pdf, we have
\begin{align}
    \int_{\mathcal X}f(g)d\mu &= 1
\end{align}
Differentiating this expression w.r.t. $E_i^L$, we have
\begin{align}
\left[E_i^L\int_{\mathcal X}f(\,\cdot\,)d\mu \right](g)
=\int_{\mathcal X}&[E_i^L f](g)d\mu = 0.
\label{eq:f_deriv}
\end{align}
Using the fact that $$[E_i^L \ell](g) = \frac{[E_i^L f](g)}{f(g)},$$
we can rewrite (\ref{eq:f_deriv}) as
\begin{align}
    \int_{\mathcal X}[E_i^L \ell](g)\, f(g)\,d\mu &= 0.
\end{align}
Differentiating again w.r.t. $E_j^L$ and using the product rule, we get
\begin{align}
    \int_{\mathcal X}[E_j^L E_i^L \ell](g)\, f(g)\,d\mu +\int_{\mathcal X} [E_i^L \ell](g)\, &[E_j^L \ell](g)\, f(g)\,d\mu \nonumber \\&= 0
    \label{eq:fsecond-deriv}
\end{align}
where we have used $[\,\cdot\,]$ to clarify which function each derivative operator acts on.
Since the second term on the left is precisely $\mf F_{ij,g}^L$ (which is defined in the main paper), we obtain the Hessian form of the {\bf FIM}:
\begin{align}\mf F_{ij,g}^L = - \mathbb E \left[E_j^L E_i^L \ell(g)\right]=- \mathbb E \left[E_i^L E_j^L \ell(g)\right].\end{align}
While the derivatives $E_i^L$ and $E_j^L$ do not commute in general, the last inequality holds due to (\ref{eq:fsecond-deriv}).

\begin{remark}
    Note that the {\bf FIM} can be computed purely by using ordinary second derivatives.
    The diagonal components of the {\bf FIM} can be computed as
\begin{align*}\mf F_{ii,g}^L = - \mathbb E \left[E_i^L E_i^L \ell(g)\right] = 
- \mathbb E\left[
\frac{d^2}{dt^2} \ell\big(g \exp(t E_i)\big)\bigg|_{t=0}
\right]
\end{align*}
whereas the off-diagonal components may be obtained by exploiting the bilinearity of $F$:
\begin{align*}\mf F_{ij,g}^L &=F(E_i, E_j)\\&= \frac{1}{4}\big(F(E_i + E_j, E_i + E_j) - F(E_i - E_j, E_i - E_j)\big). \end{align*}
\end{remark}

\section{Pose Estimation using Landmarks}
\subsection{Problem Formulation}
Let $g\in SE(3)$ describe the \textit{pose} (i.e., position and orientation) of a mobile robot. We write $g = (\mf R, \mf p)$, such that $\mf R \in SO(3)$ and $\mf p \in \mathbb R^3$ are understood to refer to the rotational and translational components of $g$, respectively. To estimate its own pose, the robot uses a \textit{landmark}, whose position $\mf a \in \mathbb R^3$ is known in a global frame of reference. The location of the landmark is measured in the robot's local frame of reference as the $\mf x \in \mathcal X = \mathbb R^3$ which is subject to standard Gaussian noise, yielding the likelihood function
\begin{align}
    f(\mf x\,|g) \propto \exp\left(-\frac{1}{2}\left\lVert \mf x - \mf R^\top (\mf a  - \mf p) \right\rVert^2\right)
\end{align}
with $\lVert \, \cdot \, \rVert$ denoting the Euclidean norm \cite{chirikjian2014gaussian}. 
The quantity $\mf R^\top (\mf a  - \mf p)$ can also be written as follows:
\begin{align}
g^{-1} \begin{bmatrix}
    \mf a\\
    1
\end{bmatrix} = \begin{bmatrix}
    \mf R^\top & -\mf R^\top \mf p\\
    \mf 0 & 1
\end{bmatrix} \begin{bmatrix}
    \mf a\\
    1
\end{bmatrix} = \begin{bmatrix}
    \mf R^\top \mf a - \mf R^\top \mf p\\
    1
\end{bmatrix},
\label{eq:landmark-in-robot-frame}
\end{align}
which is useful for computing the gradient as well as for capturing the invariances/symmetries of the log-likelihood function. In particular, note that $f(\mf x|g) = f(\mf x|hg)$ for all $h\in H$, where $H\subseteq SE(3)$ is defined as follows:
\begin{align}H\coloneq\left\lbrace \big(\mf Q, (\mf I - \mf Q)\mf a\big)\,\big|\,\mf Q \in SO(3)\right\rbrace.\label{eq:symmetry-group}\end{align}
To verify this, we observe that
\begin{align}
    (h g)^{-1} \begin{bmatrix}
    \mf a\\
    1
\end{bmatrix} &= 
g^{-1}
\begin{bmatrix}
    \mf Q^\top & -\mf Q^\top (\mf I - \mf Q)\mf a\\
    \mf 0 & 1
\end{bmatrix}
\begin{bmatrix}
    \mf a\\
    1
\end{bmatrix} \\
&= g^{-1}
\begin{bmatrix}
    \mf Q^\top\mf a \mathrel-\mf Q^\top \mf a \mathrel+ \mf Q^\top\mf Q \mf a\\
    \ 1
\end{bmatrix}\\
&= 
g^{-1} \begin{bmatrix}
    \mf a\\
    1
\end{bmatrix}.
\end{align}
Thus, we can view the pose estimation problem w.r.t. a single landmark as a parameter estimation problem on $H \backslash SE(3)$, where $H \cong SO(3)$ is the subgroup of $SE(3)$ that leaves the landmark $\mf a\in\mathbb R^3$ fixed. 
Note that $H$ is three-dimensional, so $n_H=3$ and $n_{\Theta}= n_G - n_H = 3$.

\subsection{Computing the Gradient and {\bf FIM}}
Consider an arbitrary matrix in $\mathfrak{se}(3)$ of the form
\begin{align}
    X = \begin{bmatrix}
        \mf \Omega & \mf v\\
        \mf 0 & 0
    \end{bmatrix} \in \mathfrak{se}(3).
\end{align}
To evaluate the gradient of $\ell$, we need Lie derivatives of the following form:
\begin{align}
X^R \ell(g) &= \frac{d}{dt} \ell(\exp(tX)g)\Big|_{t=0} \\
&=
-\frac{d}{dt} \frac{1}{2} \left\lVert \begin{bmatrix}\mf x\\1\end{bmatrix} - g^{-1}(\mf I-tX)\begin{bmatrix}
    \mf a\\
    1
\end{bmatrix} \right\rVert^2 \bigg|_{t=0}
\label{eq:XR-ell}
\end{align}
where we used (\ref{eq:landmark-in-robot-frame}) the fact that $\exp(tX)^{-1} = \exp(-tX) = \mf I - tX + O(t^2)$.\footnote{Note that the last component (i.e., `$1$') of the vectors in (\ref{eq:XR-ell}) disappears when we take the derivative. Hence, its inclusion does not change the derivative, but it lets us write the calculation in terms of $g$.}
We have
\begin{align}
    (\mf I-tX)\begin{bmatrix}
        \mf a\\
        1
    \end{bmatrix} = 
    \begin{bmatrix}
        \mf a - t\,\mf \Omega \mf a - t\,\mf v\\
        1
    \end{bmatrix},
\end{align}
giving us, using the chain rule,
\begin{align}
    X^R \ell(g)
    &=
    -\frac{d}{dt} \frac{1}{2}\left\lVert \mf x - \mf R^\top (\mf a - \mf p) + t \,\mf R^\top(\mf \Omega \mf a + \mf v)\right\rVert^2 \bigg|_{t=0}\nonumber \\
    &= (\mf R^\top (\mf a - \mf p)-\mf x)^\top \,\mf R^\top(\mf \Omega \mf a + \mf v)\\
    &= (\mf \Omega \mf a + \mf v)^\top
    (\mf a - \mf p - \mf R \mf x).
\end{align}
This general formula allows us to compute the gradient of $\ell$ in any choice of basis, although we are especially interested in a basis for $\mf m$ that is orthonormal to that of $\mathfrak h$. 

Before we choose a basis, let us also compute the second derivatives needed to evaluate the {\bf FIM}. Let $X_i,X_j \in \mathfrak{se}(3)$ be two matrices with skew-symmetric and translational components $\mf \Omega_i, \mf \Omega_j \in \mathfrak{so}(3)$ and $\mf v_i, \mf v_j \in \mathbb R^3$, respectively. Noting that
\begin{align}
    \exp(tX_i) g \approx (\mf I + tX_i)g = \begin{bmatrix}
        \mf R + t\mf \Omega_i \mf R & \mf p + t(\mf \Omega_i \mf p + \mf v_i )\\
        \mf 0 & 1
    \end{bmatrix},
\end{align}
we have
\begin{align}
    X_i^R X_j^R \ell(g) &=  X_i^R\left[(\mf \Omega_j \mf a + \mf v_j)^\top
    (\mf a - \mf p - \mf R \mf x)\right](g)\nonumber\\
    &= \frac{d}{dt}(\mf \Omega_j \mf a + \mf v_j)^\top
    \Big(\mf a - \mf p - t(\mf \Omega_i \mf p + \mf v_i ) \nonumber\\&\hspace{3.8cm}- (\mf R + t\mf \Omega_i \mf R) \mf x\Big)\Bigr|_{t=0}\nonumber\\
    &= -(\mf \Omega_j \mf a + \mf v_j)^\top
    (\mf \Omega_i \mf p + \mf v_i + \mf \Omega_i \mf R \mf x)
\end{align}
so that
\begin{align}
    -\mathbb E\left[X_i^R X_j^R \ell(g)\right] &= (\mf \Omega_j \mf a + \mf v_j)^\top
    (\mf \Omega_i \mf p + \mf v_i + \mf \Omega_i \mf R\, \mathbb E[\,\mf x\,]) \nonumber\\
    &=  (\mf \Omega_j \mf a + \mf v_j)^\top
    (\mf \Omega_i \mf a + \mf v_i )\nonumber\\
    &= F(X_i^R, X_j^R)_g.
    \label{eq:second-deriv}
\end{align}
where we used $\mathbb E[\,\mf x\,] = \mf R^\top (\mf a - \mf p)$.
Note that while the derivatives $X_i^R$ and $X_j^R$ do not commute in general, the expected value in (\ref{eq:second-deriv}) is symmetric in $i$ and $j$ since it describes a symmetric covariant tensor (namely, the FIM). Another sanity check which verifies the formula in (\ref{eq:second-deriv}) is the positive definiteness of the FIM.

\subsection{Choosing Bases for $\mathfrak h$ and $\mf m$}
What is the Lie subalgebra, $\mathfrak h$? We recall from \cite{chirikjian2011stochastic} that the log map of $SE(3)$ is given by
\begin{align}
    \log(g)^\vee = \begin{bmatrix}
        \log (\mf R)^\vee\\
        \big[\,\mf J(\log \mf R)\,\big]^{-1} \mf p
    \end{bmatrix},
\end{align}
where we use $\vee$ to represent both the `vee' maps of $SE(3)$ and $SO(3)$ depending on the context, and (from \cite[p. 40]{chirikjian2011stochastic})
\begin{align}
    \big[\,\mf J(\mf \Omega)\,\big]^{-1} = \mf I - \frac{1}{2}\mf \Omega + \left(
        \frac{1}{\lVert \mf \Omega \rVert^2} - \frac{1 + \cos \lVert \mf \Omega \rVert}{2\lVert \mf \Omega \rVert \sin \lVert \mf \Omega \rVert}
    \right) \mf \Omega^2.
\end{align}
Thus, $\mathfrak h$ is characterized be elements of the form
\begin{align}
    \log(g)^\vee = \begin{bmatrix}
        \log (\mf Q)^\vee\\
        \big[\,\mf J(\log \mf Q)\,\big]^{-1} (\mf I - \mf Q) \mf a
    \end{bmatrix}.
\end{align}
By replacing $\log \mf Q$ with the usual basis vectors for $\mathfrak{so}(3)$, we obtain a basis for $\mathfrak h$ that has three elements as desired. The basis for $\mf m$ can be constructed using Gram Schmidt orthogonalization, which we do numerically for convenience.\footnote{The $Ad_H$ invariance of the inner product is also verified numerically; see the code on GitHub: \textrm{https://github.com/shirazkn/lie-groups/blob/main/parameter-estimation/landmarks.py}.} To initiate the Gram Schmidt process, we need to pick three linearly independent vectors which are not contained in $\mathfrak h$. For this, we can simply pick the three standard basis vectors of $SE(3)$ representing pure translations (i.e., no rotation) of space. Since pure translational motions do not have a fixed point, they certainly do not fix $\mf a$, and hence the corresponding tangent vectors do not lie in $\mathfrak h$.

\subsection{{\bf FIM} for Two Landmarks}
If we have two landmarks which each generate (independently) a measurement of the form considered in the previous subsections, then the likelihood function is
\begin{align}
    f(\mf x_1,\mf x_2\,|\,g) \propto \exp\left(-\sum_{i=1,2}\frac{1}{2}\left\lVert \mf x_i - \mf R^\top (\mf a_i  - \mf p) \right\rVert^2\right)
\end{align}
and it is clear how to compute its gradient and {\bf FIM}; they are the sums of the gradients and {\bf FIM}s of the individual landmarks, respectively. What is not obvious is how to choose $\mf m$, since each landmark has a symmetry group of the form $(\mf Q, (\mf I - \mf Q)\mf a_i)$ as given in (\ref{eq:symmetry-group}). The answer is that we should take the \textit{intersection} of these symmetry groups; that is, we note that the combined statistical model must be invariant to transformations of $\mathbb R^3$ which fix \textit{both} the landmarks simultaneously. Obviously, these are the rotations of $\mathbb R^3$ about the axis passing through $\mf a_1$ and $\mf a_2$ (assuming $\mf a_1 \neq \mf a_2$). Another way to arrive at this conclusion is to observe that 
\begin{align}
(\mf I - \mf Q)\mf a_1 &= (\mf I - \mf Q)\mf a_2\\
\Rightarrow \mf Q(\mf a_1 - \mf a_2) &= \mf a_1 - \mf a_2.
\end{align}
Thus, for the two-landmark problem, our symmetry subgroup $H$ is the one-dimensional subgroup of $SE(3)$ representing rotations about the axis passing through $\mf a_1$ and $\mf a_2$: 
\begin{align*}
    H \coloneq \left\lbrace \left(\mf Q, (\mf I - \mf Q)\mf a_1\right)\,\big|\, \mf Q = \exp\big(t\, (\mf a_1 - \mf a_2)^\wedge\big), \,t \in \mathbb R\right\rbrace,
\end{align*}
where $\exp$ refers (by an abuse of notation) to the exponential map of $SO(3)$.
Note that $n_H=1$ and $n_{\Theta}= 5$ in this case. Given three or more landmarks that are not all co-linear, the symmetry group $H$ will be trivial (i.e., $H= \lbrace e \rbrace$, $\mf h = \lbrace \mf 0 \rbrace$).


\section{Localization using Relative Distances}

\subsection{Sensor Network Model}
Let $(\mathcal V, \mathcal E)$ represent a graph with vertices $\mathcal V$ and edges $\mathcal E \subseteq \mathcal V \times \mathcal V$. The vertices are labeled as
$\mathcal V =\lbrace 1, 2, \ldots |\mathcal V|\rbrace$, where $|\,\cdot\,|$ denotes the cardinality of a set, and the edge connecting agents $i$ and $j$ is written as $(i,j)$. The state of the $i^{th}$ agent is represented by the $SE(d)$ transformation matrix $g_i$, which we will also write as $(\mf R_i, \mf p_i)$, where $\mf R_i \in SO(d)$ is the rotation matrix and
$\mf p_i \in \mathbb R^d$ is the translation vector corresponding to $g_i$. In typical engineering applications, we have $d=2$ or $3$.
As explained in the paper, $\mf p_i$ represents the agent's position whereas the role of the rotation matrix $\mf R_i$ is auxiliary; it is needed to describe the underlying homogeneous space as a coset space, but does not necessarily have a physical significance.
We have the matrix representation of $g_i$:
\begin{align}
    g_i = \begin{bmatrix}
    \mf R_i & \mf p_i\\
    \mf 0 & 1
    \end{bmatrix} \in SE(2).
\end{align}
The collective configuration of the sensor network is given by $g=(g_1, g_2, \ldots, g_{|\mathcal V|}) \in SE(2)^{|\mathcal V|}$.

To simplify the presentation, we assume that each edge in $\mathcal E$ generates a single measurement. Thus, the sample space $\mathcal X$ is the space $\mathbb R^{|\mathcal E|}$. A measurement $x\in \mathcal X$ is of the form $x = [\,\cdots \,x_{ij} \cdots\,]^\top$, where $i$ and $j$ range over the values satisfying $(i,j)\in \mathcal E$.

Let $\varphi: SE(d) \rightarrow \mathbb R_{\geq 0}$ be the mapping $(\mathbf R, \mathbf p) \mapsto \frac{1}{2}\lVert \mf p \rVert^2$, with $\lVert \, \cdot \, \rVert$ denoting the vector 2-norm (i.e., the Euclidean norm). 
The distance between agents $i$ and $j$ is given by
\begin{align}
    \varphi (g_i^{-1}g_j) = \frac{1}{2}\lVert \mf p_i - \mf p_j \rVert^2.
\end{align}
Letting $x_{ij}$ represent the distance measured along the edge $(i,j)\in \mathcal E$. If the measurements are unimodal, and the measurement obtained at each edge is independent of those obtained at the other edges, then the measurements may be modeled as follows \cite{chirikjian2014gaussian}:
\begin{align}
    f(x|g) \propto \exp\bigg(-\sum_{(i,j) \in \mathcal E}\frac{1}{2\sigma_{ij}^2} \Big(x_{ij} - \varphi (g_i^{-1} g_j)\Big)^2\bigg),
\end{align}
which assumes (for simplicity of presentation) that the measurement noise at each edge is effectively Gaussian and independent of the measurements at the other edges.
It is readily verified that $\varphi (g_i^{-1}g_j) = \varphi ((hg_i)^{-1}(hg_j))$
for all $i,j \in \mathcal V$.
This observation leads us to consider the following subgroup of $SE(d)^{|\mathcal V|}$, representing the group of symmetries w.r.t. which the measurement model is invariant:
\begin{align}
    \tilde H \coloneqq \lbrace (h,h, \ldots,h) \, |\, h \in SE(d) \rbrace \; \subseteq \; SE(d)^{|\mathcal V|}.
    \label{eq:subgroup}
\end{align}
The subgroup $\tilde H$ corresponds to rigid translations and rotations of the entire sensor network, which leaves the pairwise distances between the agents invariant.
However, there are additional symmetries which must be included, each corresponding to a mapping of the form $(\mf R_i, \mf p_i) \mapsto (\mf Q \mf R_i, \mf p_i)$ that does not change agent $i$'s position in $\mathbb R^d$. Let $H$ denote the subgroup of $SE(d)^{|\mathcal V|}$ which includes these additional symmetries in addition to those in $\tilde H$.
Thus, we have a parameter estimation problem on the homogeneous space $H \backslash SE(d)^{|\mathcal V|}$. 

\subsection{Lie Derivatives of the Log-Likelihood}
The Lie algebra of $SE(2)^{|\mathcal V|}$ is given by $\mathfrak{se}(2)^{|\mathcal V|}$. Let $\mf X_i$ represent an element in $\mathfrak{se}(2)^{|\mathcal V|}$ which is $0$ everywhere but for the component corresponding to agent $i$.
We are interested in derivatives of the form $\mf X_i^R \ell\,(g)$. 
Let the $i^{th}$ block of $\mf X_i$ be denoted as\footnote{In the paper, we set $\epsilon_{\omega}$ to $0$ since our choice of basis for $\mf m$ only has translational components. Nevertheless, the general case is provided here for the sake of completeness.}
\begin{align}
    X_i \coloneq \begin{bmatrix}
    0 & -\epsilon_\omega & \epsilon_{\textrm{x}}\\
    \epsilon_\omega & 0 & \epsilon_{\textrm{y}}\vspace{2pt}\\
    0 & 0 & 0
    \end{bmatrix} \in \mathfrak{se}(2).
\end{align}
Then, we have 
\begin{align}
    \mf X_i^R \ell\,(g) &= X_i^R\left(-\sum_{k\in \mathcal N_i}\frac{1}{2\sigma_{ik}^2} \Big(x_{ik} - \varphi \left((\,\cdot\,)^{-1} g_k\right)\Big)^2\right)(g_i) \nonumber 
\end{align}
where $\mathcal N_i$ is the set of neighbors of agent $i$ in the graph, since these correspond precisely to the terms of $\ell$ that depend on $g_i$. The derivative is evaluated as
\begin{align}
    &\frac{d}{dt} \left(-\sum_{k\in \mathcal N_i}\frac{1}{2\sigma_{ik}^2} \Big(x_{ik} - \varphi \left((\exp(tX_i)g_i)^{-1} g_k\right)\Big)^2\right)\Bigg\vert_{t=0}\nonumber\\
    &\ =-\sum_{k\in \mathcal N_i}\frac{1}{2\sigma_{ik}^2} \,\frac{d}{dt}\Big(x_{ik} - \varphi \left((\exp(tX_i)g_i)^{-1} g_k\right)\Big)^2 \Bigg\vert_{t=0}\\
    &\ =\sum_{k\in \mathcal N_i}\frac{1}{\sigma_{ik}^2} \big(x_{ik} - \varphi (g_i^{-1} g_k)\big)\frac{d}{dt}\varphi \left((\exp(tX_i)g_i)^{-1} g_k\right) \Bigg\vert_{t=0}
    \label{eq:lie-derivative}
\end{align}
Using the fact that $\exp(tX_i) = \mf I + t X_i + O(t^2)$, we have
\begin{align}
    \frac{d}{dt} \varphi &\left(\big(\exp(tX_i)g_i\big)^{-1} g_k\right)\Big\vert_{t=0} \nonumber\\
    &= \frac{d}{dt} \varphi \left(\big((\mf I + t X_i)g_i\big)^{-1} g_k\right)\Big\vert_{t=0} \\
    &= \frac{d}{dt}\frac{1}{2}\left\lVert \mf p_i  + t \epsilon_\omega \begin{bmatrix}
        0 & -1\\
        1 & 0 
    \end{bmatrix}\mf p_i 
    + t \begin{bmatrix}\epsilon_{\textrm x}\\\epsilon_{\textrm y} \end{bmatrix} 
    - \mf p_k \right\rVert^2\Bigg\vert_{t=0}\\
    &= \left(\mf p_i  - \mf p_k \right)^\top\left(\epsilon_\omega \begin{bmatrix}
        0 & -1\\
        1 & 0 
    \end{bmatrix}\mf p_i + 
    \begin{bmatrix}\epsilon_{\textrm x}\\\epsilon_{\textrm y} \end{bmatrix}
        \right).
    \label{eq:varphi-derivative}
\end{align}
Plugging (\ref{eq:varphi-derivative}) into (\ref{eq:lie-derivative}), we get the value of $\mf X_i^R \ell\,(g)$. In practice, we compute $\mf X_i^R \ell\,(\hat g)$ given an estimate $\hat g$ of the true configuration $g$.


\subsection{Choosing a Basis for $\mf m$}
Recall that the dimensions of $SE(d)$ and $SO(d)$ are $\binom{d+1}{2}$ and $\binom{d}{2}$, respectively, where $\binom{n}{k}$ represents the binomial coefficient with indices $n$ and $k$.
Therefore, the dimension of $G$ is $n_G = \binom{d+1}{2}|\mathcal V|$, while the dimension of $H$ is $n_H = \binom{d}{2}|\mathcal V| + \binom{d+1}{2}$ corresponding to the rotations of the individual agents and the subgroup described in (\ref{eq:subgroup}), respectively. The dimension of $\Theta$ (which is also the dimension of $\mf m$) is computed as $n_{\Theta} = n_G - n_H$, which is found to be $n_{\Theta}=d|\mathcal V| - \binom{d+1}{2}$. A question then arises: how do we choose the $n_{\Theta}$ basis vectors of $\mf m$? 

Specializing to the case of $d=2$, we see that $\mf m$ is a $2|\mathcal V| - 3$ dimensional vector space. One way to proceed is to first choose the $2|\mathcal V|$ vectors in $\mathfrak g$ that represent the $2$-dimensional translation of each of the $|\mathcal V|$ agents. Thereafter, we discard the first $3$ vectors, yielding $2|\mathcal V| - 3$ basis vectors as desired. The implication of this choice of basis is that, when taking directional derivatives of $\ell$, the first agent's position is held fixed while only the $\textrm{x}$-coordinate of the second agent is allowed to vary. 

The derivation of the {\bf FIM} proceeds identically to the derivation in \cite{le2018localizability} (while using the formulae given in the previous section), so it is omitted here. Note that the aforementioned work introduced additional constraints (in the form of \textit{anchors}, i.e., agents that can measure their own position) to ensure the invertibility of the {\bf FIM}. 
In contrast, our theory allows us to consider different $n_{\Theta}$-dimensional bases for $\mf m$, each of which yields a different Fisher scoring algorithm, but will nonetheless ensure that the parameter estimation problem is well-posed and the {\bf FIM} invertible.

\subsection{Choosing another Basis for $\mf m$}\label{sec:another-basis}
The choice of basis given in the previous section is adequate for computing the FIM and CRB. However, there is a subtle issue with it that makes it undesirable for the purpose of implementing the Fisher scoring algorithm.
The issue arises when agents $1$ and $2$ share the same $\textrm{y}$-coordinate (i.e., are horizontally aligned). In this special case, perturbing the $\textrm{y}$-coordinate of agent $2$ will not change its distance from the last agent. 
It must be emphasized that this is \textit{not} a limitation of our theory; in fact, our theory helps explain why the above choice of basis fails. It is because the submanifold $\mathcal S_g$ can sometimes degenerate into a lower-dimensional manifold, or it may partially align itself along $\mathfrak h$. More precisely, the issue is that the ``basis" we picked for $\mf m$ either has linear dependence amidst its vectors, or it has linear dependence with some vectors in $\mathfrak h$; therefore, it is not actually a basis for $\mf m$.

One way to fix the aforementioned issue is to assume that the first two agents are vertically aligned, which is what we have done in the main paper. Another possibility is to make the following choice of basis for $\mf m$.
We choose a set of $n_{\Theta}$ edges in $\mathcal E$, and let the basis vectors $\lbrace \mf E_i \rbrace_{i=1}^{n_{\Theta}}$ represent the shrinking/expansion of these edges. That is, the $i^{th}$ basis vector $\mf E_i$ represents the $\mathfrak{se}(2)^{|\mathcal V|}$ element given by
\begin{align}
    \mf E_i = 
\begin{tikzpicture}[baseline=(current bounding box.center)]
    \matrix (m) [matrix of math nodes, nodes in empty cells, left delimiter={[}, right delimiter={]}] {
        \vdots \\
            \mf 0 \\
            \mf V_{p,q} \\
            \mf 0 \\
            \vdots \\
            \mf 0 \\
            \mf V_{q,p} \\
            \mf 0 \\
            \vdots \\
            };
            \node [right=of m-3-1, yshift=-0.5ex] (mth) {$p^{th}$ block};
            \node [right=of m-7-1, yshift=-0.5ex] (nth) {$q^{th}$ block};
            \draw[<-] (mth.west) -- ++(-0.5,0) |- (m-3-1.east);
            \draw[<-] (nth.west) -- ++(-0.5,0) |- (m-7-1.east);
\end{tikzpicture}
\end{align}
for some edge $(p,q)\in\mathcal E$, where
\begin{align}
    \mf V_{p,q} = \begin{bmatrix}
        \mf 0 & \mf p_p - \mf p_q\\
        \mf 0 &  0
    \end{bmatrix} \quad  \textit{and}  \quad  \mf V_{q,p} = -\mf V_{p,q}.
\end{align}

As it turns out, it is possible to choose $2|\mathcal V|-3$ edges in $\mathcal E$ in a way that the resulting basis vectors are linearly independent {if and only if} the sensor network's configuration satisfies a condition called \textit{rigidity}; this is a well-known result in the sensor network localization literature. The corresponding subgraph of $\mathcal G$ which is induced by our choice of $2|\mathcal V|-3$ edges is called a \textit{minimally rigid subgraph}. This choice of basis gives us yet another connection to the existing literature on rigidity theory.

With the choice of basis considered in this section, we can once again compute the gradient of $\ell$ as well as the {\bf FIM}. We have,
\begin{align}
    &\mf E_i^R \ell\,(g)\nonumber\\
    =&
    \sum_{k\in \mathcal N_m}\frac{1}{\sigma_{pk}^2} \big(x_{pk} - \varphi (g_p^{-1} g_k)\big)
    \left(
    \mf p_p - \mf p_k
        \right)^\top\left(
    \mf p_p - \mf p_q
        \right)\nonumber\\
    &+\sum_{k\in \mathcal N_n}\frac{1}{\sigma_{qk}^2} \big(x_{qk} - \varphi (g_q^{-1} g_k)\big)\left(
        \mf p_q - \mf p_k
            \right)^\top
\left(
\mf p_q - \mf p_p
    \right)
\end{align}
Now, let $\mf E_j$ represent another basis vector. If this vector correponds to the edge $(l,m)$, then the derivation of the {\bf FIM} proceeds on a case-by-case basis depending on whether the edge $(l,m)$ is incident on the agents $p$ and $q$ or their neighbors. 

This new {\bf FIM} will be different from the {\bf FIM} presented in the main paper, simply because we have chosen a different basis for $\mf m$. The underlying FIM (where `M' stands for Metric, rather than Matrix) is the same; the FIM is an inherent feature of the statistical model that is independent of how the model is parametrized.

\bibliographystyle{IEEEtran}
\bibliography{../references}